\setlist[enumerate]{leftmargin=2em,itemindent=0em, labelindent=0pt,labelwidth=1.5em,labelsep=.5em, align=left, noitemsep}
\newlist{txtenum}{enumerate}{1}
\setlist[txtenum]{leftmargin=0em,itemindent=1.5em, labelindent=0pt,labelwidth=1em,labelsep=.5em, align=left}
\theoremstyle{plain}
\newtheorem{theorem}{Theorem}
\newtheorem*{theorem*}{Theorem}
\newtheorem*{proposition*}{Proposition}
\newtheorem*{corollary*}{Corollary}
\newtheorem{lemma}[theorem]{Lemma}
\newtheorem*{lemma*}{Lemma}
\newtheorem*{observation*}{Observation}
\newtheorem*{conjecture*}{Conjecture}
\newtheorem*{question*}{Question}
\newtheorem*{questions*}{Questions}
\newtheorem*{problem*}{Problem}
\newtheorem*{problems*}{Problems}
\newtheorem*{openproblem*}{Open Problem}
\theoremstyle{definition}
\newtheorem*{definition*}{Definition}
\newtheorem{example}[theorem]{Example}
\newtheorem*{example*}{Example}
\newtheorem*{exercise*}{Exercise}
\newtheorem{remark}[theorem]{Remark}
\newtheorem*{remark*}{Remark}
\newtheorem*{remarks*}{Remarks}
\newtheorem{fact}[theorem]{Fact}
\theoremstyle{remark}
\newtheorem{claim}[theorem]{Claim}
\newtheorem*{claim*}{Claim}
\newcommand{\subclass}[1]{}
\newcommand{\enumTi}[1]{\renewcommand{\theenumi}{#1}}
\newcommand{\alphenumi}{\enumTi{\alph{enumi}}}
\newcommand{\romenumi}{\enumTi{\roman{enumi}}}
\newlength{\hspaceforlengthglumpf}
\renewcommand{\em}{\sl}
\newcommand{\comment}[1]{\text{\footnotesize[#1]}}
\newcommand{\One}{\mathbf{1}}
\newcommand{\lt}{\left}
\newcommand{\rt}{\right}
\newcommand{\abs}[1]{{\lt\lvert{#1}\rt\rvert}}
\newcommand{\nfrac}[2]{{\nicefrac{#1}{#2}}}
\newcommand{\ZZ}{\mathbb{Z}}
\DeclareMathOperator*{\Prb}{\mathbb{P}}
\DeclareMathOperator*{\Exp}{\mathbb{E}}
\DeclareMathOperator{\IndicatorOp}{\mathbf{1}}
\newcommand{\Ind}{\IndicatorOp}
\newcommand{\eps}{\varepsilon}
\newlength{\algotabbingwidth}
\newcommand{\mypar}{\par\smallskip\noindent}
\newcommand{\todo}[1]{}
\newcommand{\TODO}[1]{\Yikes}
\newcommand{\TODOtxt}[1]{\Yikes}
\newcommand{\TODOpar}[1]{\Yikes}
\newcommand{\status}[1]{}
\newcommand{\sq}{\square}
\newcommand{\I}[1]{\One_{#1}}
\newcommand{\CmnE}{E^\cap}
\newcommand{\CmnEa}{E^{\cap\ast}}          \newcommand{\Sa}{S^{\ast}}
\newcommand{\RstE}{E^r}                    \newcommand{\Rst}{R}
\begin{document}
% \mainmatter % start of the contributions
\title{On the Graph of the Pedigree Polytope}
\author{Abdullah Makkeh, Mozhgan Pourmoradnasseri, Dirk Oliver Theis\thanks{Supported by the Estonian Research Council, ETAG (\textit{Eesti Teadusagentuur}), through PUT Exploratory Grant \#620, and by the European Regional Development Fund through the Estonian Center of Excellence in Computer Science, EXCS.}\\[1ex]
  \small Institute of Computer Science {\tiny of the } University of Tartu\\
  \small \"Ulikooli 17, 51014 Tartu, Estonia\\
  \small \texttt{\{mozhgan,dotheis\}@ut.ee}%
}
\maketitle
% \title%
% \author{%
% Abdullah Makkeh,\\
% Mozhgan Pourmoradnasseri,\\
% Dirk Oliver Theis%
% \thanks{Supported by the Estonian Research Council, ETAG (\textit{Eesti Teadusagentuur}), through PUT Exploratory Grant \#620, and by the European Regional Development Fund through the Estonian Center of Excellence in Computer Science, EXCS.}%
% \\[1ex]
% \normalsize University of Tartu\\
% \normalsize Institute of Computer Science\\
% \normalsize \"Ulikooli 17\\
% \normalsize 51014 Tartu\\
% \normalsize Estonia\\
% \normalsize\texttt{\{mozhgan,dotheis\}@ut.ee}%
% }
% %
% %   \urladdr{http://ac.cs.ut.ee/people/dot}%
%%
%%%%%%%%%%%%%%%%%%%%%%%%%%%%%%%%%%%%%%%%%%%%%%%%%%%%%%%%%%%%%%%%%%%%%%%%%%%%%%%%%%%%%%%%%%%%%%%%%%%%%%%%%%%%%%%%%%%%%%%%%%%%%%%%%%%%%%%%%%%%%%%%%%%%%% 
%%
%%
%%   \date{Compiled: \today, \currenttime}%{Version: Fri Sep 23 10:21:22 EDT 2016}
%%
%%
%%%%%%%%%%%%%%%%%%%%%%%%%%%%%%%%%%%%%%%%%%%%%%%%%%%%%%%%%%%%%%%%%%%%%%%%%%%%%%%%%%%%%%%%%%%%%%%%%%%%%%%%%%%%%%%%%%%%%%%%%%%%%%%%%%%%%%%%%%%%%%%%%%%%%% 

\begin{abstract}
  Pedigree polytopes are extensions of the classical Symmetric Traveling Salesman Problem polytopes whose graphs (1-skeletons) contain the TSP polytope graphs as spanning subgraphs.

  While deciding adjacency of vertices in TSP polytopes is coNP-complete, Arthanari has given a combinatorial (polynomially decidable) characterization of adjacency in Pedigree polytopes.  Based on this characterization, we study the graphs of Pedigree polytopes asymptotically, for large numbers of cities.

  Unlike TSP polytope graphs, which are vertex transitive, Pedigree graphs are not even regular.  Using an ``adjacency game'' to handle Arthanari's intricate inductive characterization of adjacency, we prove that the minimum degree is asymptotically equal to the number of vertices, i.e., the graph is ``asymptotically almost complete''.

  \medskip%
  \textbf{Keywords:} Traveling Salesman Polytopes, Probabilistic Combinatorics, Extensions of Polytopes, 1-Skeletons/Graphs of Polytopes.
\end{abstract}

\section{Introduction}\label{sec:intro}
The graph (1-skeleton) of a polytope has as its vertices (edges) the vertices (edges) of the polytope.  The most venerable result on graphs on polytopes: Steinitz's Theorem states that 3-connected planar graphs are precisely the graphs of 3-dimensional polytopes.

Properties of graphs of polytopes of higher dimension are of interest not only in the combinatorial study of polytopes, but also in Combinatorial Optimization, and Theoretical Computer Science.

For example, the famous Hirsch conjecture in the combinatorial study of polytopes, settled by Santos~\cite{Santos:hirsch:2012}, concerned the diameter of graphs of polytopes.

In Combinatorial Optimization, the study of the graphs of polytopes associated with combinatorial optimization problems was initially motivated by the search for algorithms for these problems.

In Theoretical Computer Science, the theorem by Papadimitriou~\cite{Papadimitriou:adj-tsp:78} that Non-Adjacency of vertices of (Symmetric) Traveling Salesman Problem (TSP) polytopes is NP-complete, gave rise to similar results about other families of polytopes (cf.\ \cite{Aguilera:adj-setcover-polyh:2014,Maksimenko:adj-np:2014} and the references therein, for recent examples).

There have been particularly many attempts to understand the graph of TSP polytopes, and, where this turned out to be infeasible, of TSP-related polytopes (e.g., \cite{Sierksma:DAM:00}; cf.\ \cite{Groetschel-Padberg:polyh-th:1985,NaddefRinaldi93,Theis:DiscreteOpt:2014}).  The presence of long cycles has been studied (\cite{Sierksma:skeleton:1993}, see also \cite{Naddef-Pulleyblank:JCTB:84,Naddef:JCTB:84}), as has the graph density / vertex degrees (e.g., \cite{Sarangarajan:tspskel-density-LB:1997}, see also \cite{Kaibel-Remshagen:03,Kaibel:lowdim-faces-rndptp:2004}).

The original motivation for the research in this paper was a 1985 conjecture by Gr\"otschel and Padberg~\cite{Groetschel-Padberg:polyh-th:1985} --- well-known in polyhedral combinatorial optimization (Problem \# 36 on Schrijver's list~\cite{Schrijver:Book:03}) --- stating that the graph of TSP polytopes has diameter~2.  Already in~\cite{Groetschel-Padberg:polyh-th:1985}, Gr\"otschel and Padberg extend the question for the diameter to a family of TSP-related polytopes which seemed easier to understand at the time.

\begin{figure}[htbp]%
  \centering%
  \scalebox{.4}{\includegraphics{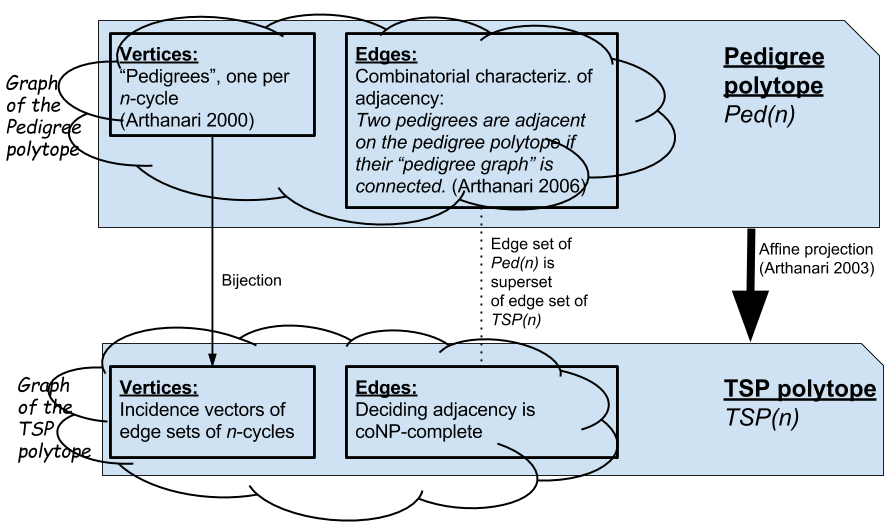}}%
  \caption{Polytopes and graphs}\label{fig:polytopes-graphs-diagram}%
\end{figure}

A more recent family of TSP-related polytopes are the \textit{Pedigree polytopes} of Arthanari~\cite{Arthanari:ped-one:2000}.  For this family of polytopes, adjacency of vertices can be decided in polynomial time~\cite{Arthanari:DiscreteMath:06}.  Moreover, the graphs of the TSP polytopes are spanning subgraphs of the graphs of the Pedigree polytopes~\cite{Arthanari:DiscreteOpt:13}.  This follows as the Pedigree polytope for~$n$ cities is an \textit{extension,} without ``hidden''~\cite{Pashkovich:hidden:15} vertices, of the TSP polytope~\cite{Fiorini-Kaibel-Pashkovich-Theis:CombLB:13}.  The vertex set of the TSP polytope for~$n$ cities is in a natural bijection with the set of all cycles with node set $[n] := \{1,\dots,n\}$; the same is true for the vertex set of the Pedigree polytope for~$n$ cities.  See Fig.~\ref{fig:polytopes-graphs-diagram}.

The main result of our paper, is the following fact about graphs of Pedigree polytopes.  Recall that the number of vertices of either the TSP polytope or the Pedigree polytope for~$n$ cities is the number of $n$-cycles, which is $(n-1)!/2$.

\begin{theorem}\label{thm:main-polytope}
  The minimum degree of a vertex on the Pedigree polytope for~$n$ cities is $(1-o(1))\cdot(n-1)!/2$ (for $n\to\infty$).
\end{theorem}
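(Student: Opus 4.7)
The plan is to prove the equivalent probabilistic statement: fixing an arbitrary vertex $v$ of the Pedigree polytope for $n$ cities and choosing a second vertex $w$ uniformly at random from the remaining $(n-1)!/2 - 1$ vertices, the probability that $v$ and $w$ are non-adjacent tends to $0$ as $n\to\infty$. Since the total number of vertices is $(n-1)!/2$, this implies the claimed lower bound on the minimum degree (the matching upper bound $(n-1)!/2 - 1$ being trivial).

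The main tool is Arthanari's polynomial-time combinatorial characterization of adjacency, which is inductive in the number of cities: a pedigree for $k$ cities extends a pedigree for $k-1$ cities by adding one new coordinate, and adjacency of two pedigrees $v,w$ is determined by checking, at each step $k=4,\dots,n$, a local condition on the $k$-th coordinates together with auxiliary data computed from the comparison of the truncations to the first $k-1$ cities. I would formalize this as the ``adjacency game'' announced in the abstract: starting from the trivial base case, at step $k$ the ``players'' reveal the $k$-th coordinates of $v$ and $w$ and update a shared state; the game is \emph{won at step $k$} exactly when Arthanari's condition is satisfied at that step. Then $v$ and $w$ are adjacent on the Pedigree polytope if and only if the game is won at every step $k=4,\dots,n$.

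With this setup, a union bound yields
\[
  \Prb[v\not\sim w] \;\le\; \sum_{k=4}^{n}\Prb\bigl[\,\text{lost at step } k \;\big|\; \text{won at steps } 4,\dots,k-1\,\bigr].
\]
The heart of the proof is bounding each conditional probability. Because $w$ is uniform over all pedigrees for $n$ cities, the conditional distribution of its $k$-th coordinate given the first $k-1$ coordinates is essentially uniform over the legal extensions (whose number grows linearly in $k$), and one would show that, on a ``typical'' history, the set of $k$-th coordinates that cause the game to be lost at step $k$ has relative size decaying fast in $n-k$. The main obstacle is precisely this history-dependence: which $k$-th coordinates are ``bad'' is a function of the full history of both $v$ and $w$, so the step-$k$ failure events are highly non-independent and the conditional distributions have to be tracked inductively through Arthanari's construction. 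I expect to overcome this by identifying a rich ``typical set'' of histories on which a strong quantitative bound on the bad set is provable, while separately showing that atypical histories contribute $o(1)$ in aggregate. Summing the resulting step-wise estimates then gives $\Prb[v\not\sim w]=o(1)$, which is the theorem.
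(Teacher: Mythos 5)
Your reduction to ``fix $v$, pick $w$ uniformly at random, show $\Prb[v\not\sim w]=o(1)$'' is exactly the paper's Theorem~\ref{thm:main:pedigreegraph}, and your observation that $w$'s $k$-th coordinate is conditionally uniform over the legal extensions is the paper's Fact~\ref{fact:Bn-uniform}. But the core of your argument rests on a mischaracterization of Arthanari's condition. Adjacency of the two $n$-city pedigrees is equivalent to connectivity of the \emph{final} pedigree graph $G^{AB}_n$ (Theorem~\ref{thm:arthanari}); it is \emph{not} a conjunction of per-step conditions, and there is no sensible local event ``won at step $k$'' whose conjunction over $k=4,\dots,n$ characterizes adjacency. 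The graph routinely becomes disconnected at intermediate times --- an isolated vertex is created at step $k$ with probability $4/((k-1)(k-2))$ (Lemma~\ref{lem:prob-isol-vert}), so the expected number of such disconnection events is exactly $2$ and the probability that at least one occurs is bounded away from $0$ --- and adjacency still holds provided later insertions reconnect the components. Consequently your union bound $\sum_k \Prb[\text{lost at step }k\mid\cdots]$ cannot yield $o(1)$: if ``lost at step $k$'' means anything like ``an isolated vertex appears'' or ``$G_k$ is disconnected,'' the sum is $\Theta(1)$, and if it means something weaker, the conjunction of the complements no longer implies adjacency. The failure events are not only non-independent, as you note; they are simply not the right events.

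What is genuinely missing is the mechanism by which disconnection gets \emph{repaired}, which is where all the work in the paper lies. The paper tracks the pair $(S_n,T_n)$, where $S_n$ is the number of cycle-edges common to $A_n$ and $B_n$ and $T_n$ is the number of components of $G^{AB}_n$, and establishes transition bounds (Lemma~\ref{lem:transition-prbs}) showing that whenever Alice inserts into a non-common cycle-edge (a ``d-move''), Bob has probability at least $(T_n-1)/n$ of merging two components. Since each individual chance is only $\Theta(1/n)$ and $\sum 1/n$ diverges slowly, one must show Alice is \emph{forced} to make d-moves often: a supermartingale/Azuma argument (Lemma~\ref{lem:Alice-will-play--d--often}) shows that c-moves rapidly deplete the supply $S_n$ of common edges, so at least $n_0/3$ of the moves in $[n_0,2n_0]$ are d-moves, giving a constant probability of a merge per doubling window (Lemma~\ref{lem:T-will-decrease}), which is then boosted and iterated. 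Combined with the fact that $T_\sq$ increases at most $O(1)$ times in expectation and almost surely stops increasing after a fixed $n_0$, this drives $T_n$ to $1$. Your ``typical set of histories'' idea does not substitute for this: the obstruction is not atypical histories but the adversarial choice of $v$ together with the fact that reconnection probabilities are individually vanishing, so a one-step conditional bound decaying in $n-k$ is not available.
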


In particular, the density graph of Pedigree polytopes is asymptotically equal to~1.  Note, though, that while for TSP polytopes, these two statements are equivalent, this is not the case of Pedigree polytopes.  The reason is that Pedigree polytopes are not as ``symmetric'' as TSP polytopes: For every two vertices $u,v$ of the TSP polytope for~$n$ cities, there is an affine automorphism of the polytope mapping $u$ to~$v$.  (Similar statements are true for monotone-TSP and graphical-TSP polytopes.)  This is not true for Pedigree polytopes: Arthanari's construction removes the symmetry to a large extent.

Numerical simulations show that, even for relatively large $n$ (say, $\approx 100$), the graph of the Pedigree polytope is not complete.  We have made no attempt, however, to find a non-trivial upper bound for the minimum degree.

\bigskip\noindent%
We now give a non-technical description of the proof of Theorem~\ref{thm:main-polytope}.
\paragraph{The adjacency game.}
Alice and Bob play a game on a graph $\Gamma$ (which both of them see): Alice picks a vertex, then Bob picks a vertex; if the two vertices are adjacent, Bob wins, otherwise Alice wins.  The game is silly, Bob always wins (unless there is an isolated vertex).  Let us blindfold Bob --- he can no longer see the graph or Alice's move, so the best he can do is pick a random vertex.  Alice will win the game with probability $\mbox{mindeg}(\Gamma)/ \abs{V(\Gamma)}$.
The game is still trivial, but note that Alice's chance of winning is linked to the minimum degree.

The game becomes interesting if the graph evolves over time.  At each time~$n$, each vertex of the current graph $\Gamma_{n-1}$ will be replaced by a set of child vertices.  Complex rules (known to Alice\footnote{Bob doesn't need them since he plays randomly.}) govern whether or not a particular child vertex inherits the adjacency relation to a particular child of a neighbor of his parent.  Also at each time, both Alice and Bob have to update their choices for vertices: Alice can pick any child of her current vertex; Bob, being blindfolded, picks a child of his current vertex uniformly at random.

Now, analyzing the game as a random process (asymptotically for $n\to \infty$) becomes a non-trivial task, which reveals the minimum degree of $\Gamma_n$, as $n\to \infty$.  (Every vertex of $\Gamma_n$ must have the same number of children for Bob's random decisions to form a uniformly random vertex.)

We haven't made clear what exactly Alice's goal is, or how payout takes place, but these questions will fall in place naturally in our situation (for $n\to\infty$, with probability tending to~1, Alice will lose in every round).

Our graph $\Gamma_n$ is the graph of the pedigree polytope for~$n$ cities, whose vertices are ``pedigrees''.  Stricly speaking, pedigrees are combinatorial-geometric objects defined by Arthanari, but to reduce technical overhead, we will work with cycles \textsl{only.}  So, at time~$n$, Alice and Bob each holds a cycle with node set~$[n-1]$.  A child-vertex is formed from a parent-vertex by inserting the new node~$n$ into the cycle.  Alice picking one of these children amounts to inserting the new node~$n$ into her cycle; Bob inserts the new node~$n$ at a uniformly random position into his cycle.

As for the complex rules governing inheritance of adjacency: these are given by Arthanari's characterization of adjacency on Pedigree polytopes~\cite{Arthanari:DiscreteMath:06}, which are best understood as a process, updating a combinatorial structure at each time.

\paragraph{Pedigrees and how they are adjancent.}
Arthanari's beautiful idea of a pedigree is that of a cycle ``evolving'' over time: Starting from the unique cycle with node set $\{1,2,3\}$ at time~$3$, at time~$n\ge 4$, the node~$n$ is added to the cycle by subdividing one of its cycle-edges.  We say that $n$ is \textit{inserted into} that cycle-edge.

Arthanari's combinatorial condition for adjancency on the Pedigree polytope has process character, too, with a combinatorial structure, the \textit{pedigree graph}~$G$, evolving over time.  Suppose we have two evolving cycles.  Let us refer to $A$ as Alice's cycle, and to $B$ as Bob's cycle.  At time~$n$, Alice chooses a cycle-edge of her current cycle~$A$ (with node set $[n-1]$) and inserts her new node~$n$ into that cycle-edge to form her new cycle (with node set $[n]$).  Then Bob chooses a cycle-edge of his current cycle~$B$, and inserts his new node~$n$ into that cycle-edge to form his new cycle.

The pedigree graph~$G$ may also change at time~$n$.  The new pedigree graph is either equal to the current one, or arises from the current one by adding the new vertex\footnote{Trying to reduce confusion, in our terminology, the polytopes have \textit{vertices} which correspond to cycles; the cycles consist of \textit{nodes} and \textit{cycle-edges,} and the the pedigree graph of two cycles has \textit{vertices}.} $n$ with incident edges.  The choices of Alice and Bob determine: whether the new vertex is added or not; the number of edges incident to the new vertex~$n$; the end vertices of these edges.

Arthanari's combinatorial characterization of adjacency on the Pedigree polytope is now this.
\begin{theorem}[\cite{Arthanari:DiscreteMath:06}]\label{thm:arthanari}
  At all times $n\ge 4$, the two vertices of the Pedigree polytope for~$n$ cities corresponding to the (new) cycles $A$ and~$B$ with node set~$[n]$ are adjacent in the Pedigree polytope, if, and only if, the (new) graph~$G$ is connected.
\end{theorem}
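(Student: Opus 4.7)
My plan is to prove the theorem by induction on $n$, using the standard criterion for vertex adjacency on a polytope: two vertices $u,v$ of a polytope $P$ are adjacent if and only if whenever $\tfrac12(u+v)=\sum_{V}\lambda_V V$ is a convex combination of vertices, only $u$ and $v$ have positive coefficient; equivalently, non-adjacency is witnessed by a pair $\{C,D\}\ne\{u,v\}$ of vertices with $u+v=C+D$. The base case $n=4$ is immediate: the pedigree polytope is a triangle, every pair of its three vertices is adjacent, and for any two of its pedigrees the graph $G$ consists of the single vertex $4$ and is therefore connected.

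For the inductive step I would exploit the fact that the pedigree polytope for $n$ cities admits a natural affine projection onto the pedigree polytope for $n-1$ cities whose fibres are simplices indexed by the cycle-edges of the parent cycle. Let $A',B'$ be the parents of $A,B$ on node set $[n-1]$ and let $G'$ be the corresponding pedigree graph. A non-trivial decomposition $A+B=C+D$ projects to a decomposition $A'+B'=C'+D'$; by the induction hypothesis this is possible only when $\{C',D'\}=\{A',B'\}$ or when $G'$ is disconnected. Assuming $G$ is connected, both possibilities must be excluded: in the first, the fibre constraints combined with the edges incident to the new vertex $n$ force $\{C,D\}=\{A,B\}$; in the second, the edges incident to $n$ bridge the components of $G'$, which translated into polytope terms shows that any attempted ``swap'' producing $C,D$ across components violates a facet-defining constraint tied to node $n$.

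Conversely, if $G$ is disconnected, I would construct an explicit non-trivial decomposition $A+B=C+D$ by swapping the insertion data of $A$ and $B$ across one component of $G$, producing two valid pedigrees $C,D$ distinct from $A,B$. The main obstacle is a precise description of which swaps yield valid pedigrees and which violate the cycle structure; this is exactly what Arthanari's edge-rules for the new vertex of $G$ must encode. Verifying the correspondence requires a careful case analysis depending on whether $A'$ and $B'$ insert node $n$ into the same or different cycle-edges and on the relative positions of the endpoints of these cycle-edges. Translating graph-connectivity into feasibility of swaps at the polytope level is where the bulk of the work lies, and the construction of the pedigree graph is evidently tailored precisely so that this translation becomes clean and reversible.
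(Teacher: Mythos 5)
This theorem is not proved in the paper at all: it is quoted verbatim from Arthanari \cite{Arthanari:DiscreteMath:06} as the external input on which the whole paper rests, so there is no in-paper proof to compare your attempt against. Judged on its own terms, your text is a plan rather than a proof, and it has concrete gaps.

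First, your opening reduction already needs justification: for a general $0/1$-polytope, non-adjacency of $u$ and $v$ means that $\tfrac12(u+v)$ admits a convex representation using some third vertex, but it does \emph{not} automatically mean that there is a pair $\{C,D\}\ne\{u,v\}$ of vertices with $u+v=C+D$. That stronger property (being a ``combinatorial polytope'' in the sense of Naddef--Pulleyblank) is itself a theorem that must be established for pedigree polytopes before the ``swap'' picture you rely on is available. Second, the inductive step is asserted, not argued. You write that the fibre constraints ``force'' $\{C,D\}=\{A,B\}$ in one case and that swaps across components of $G'$ ``violate a facet-defining constraint tied to node $n$'' in the other, and you concede that verifying the correspondence between graph connectivity and feasibility of swaps ``is where the bulk of the work lies.'' That correspondence \emph{is} the theorem; deferring it leaves nothing proved. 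You also do not treat the degenerate but unavoidable cases: $A_{n-1}=B_{n-1}$ (the projected decomposition is trivial and the induction hypothesis says nothing), node $n$ not being a vertex of $G_n$ at all, and the four combinations of $G_{n-1}$ connected/disconnected versus $G_n$ connected/disconnected --- in particular the case where $G_{n-1}$ is disconnected (so $A',B'$ are non-adjacent and nontrivial decompositions of $A'+B'$ exist) yet $G_n$ is connected because the new vertex bridges the components; there you must show that \emph{none} of the possibly many decompositions at level $n-1$ lifts to level $n$. Arthanari's actual argument is a long technical development (via flows and a layered construction of the graph), and your outline, while pointing in a sensible direction, does not substitute for it.
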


Theorem~\ref{thm:main-polytope} states that, if~$B$ is a cycle chosen uniformly at random from all cycles on~$[n]$, then
\begin{equation*}
  \min_{A} \Prb\bigl( \text{$\{$ the pedigree graph is connected $\}$} \bigr) = 1-o(1),
\end{equation*}
where the minimum ranges over all cycles on~$[n]$.  Lower bounding this quantity amounts to studying the following adjacency game: Alice's goal is to make the graph~$G$ disconnected; whereas Bob makes uniformly random choices all the time.  We prove that Alice loses with probability $1-o(1)$.  To analyze the game, we study a kind of a Markov Decision Process with state space $\ZZ_+\times \ZZ_+$.  The states are pairs $(s,t)$, where~$s$ is the number of common cycle-edges in Alice's and Bob's cycles, and~$t$ is the number of connected components of the current pedigree graph.

\paragraph{In the next section,}
we will give rigorous statements corresponding to the hand-waving explanations above.  In Section~\ref{Sec:rnd-cyc-basics}, we discuss some basic facts about Bob playing randomly, and discuss the intuition of the proof of the main.  That section is followed by a more technical section containing the proofs of the basic properties of pedigree graphs, random or deterministic.  In Section~\ref{Sec:cnct-game}, we introduce the Markov-Decision-Problem-ish situation that Alice finds herself in.  The proof of the main theorem is completed in Section~\ref{Sec:mainproof}.  We conclude with a couple of questions for future research which we find compelling.

\section{Exact Statements of Definitions, Facts, and Results}\label{sec:exact-statements}
\subsection{Cycles, One Node at a Time}
Our cycles are undirected (so, e.g., there is only one cycle on 3 nodes).
For ease of notation, let us say that the \textit{positive direction} on a cycle with node set~$[n]$, $n\ge 3$, is the one in which, when starting from the node~1, the node~2 comes before the node~3; the other direction the \textit{negative direction.}
When referring to the $k$th cycle-edge of a cycle, we count the cycle-edges in the positive direction; the 1st one being the one incident on node~1.  E.g., in the unique cycle with node set $\{1,2,3\}$, the 1st cycle-edge is $\{1,2\}$, the 2nd is $\{2,3\}$, and the 3rd is $\{3,1\}$.

As mentioned in the introduction, Arthanari's Pedigree is a combinatorial object representing the ``evolution'' of a cycle ``over time'', and the combinatorial definition of adjacency of pedigrees makes use of that step-by-step development.  The set of Pedigrees is in bijecton with the set of cycles.  In our context (we do not have to associate points in space with Pedigrees), defining Pedigrees and then explaining the bijection with cycles is more cumbersome than necessary.  For convenience, we use the following more convenient definitions, which mirror the definition of Pedigrees, but they use cycles only.  Let us say that an \textit{infinite cycle}\footnote{The reason why we use this notion of ``infinite cycle'' is pure convenience.  It does not add complexity, but it makes many of statements and proofs less cumbersome.  Indeed, instead of an infinite cycle, it is ok to just use a cycle whose length~$M$ is longer than all the lengths occuring in the particular argument.  So instead of ``let $A$ be an infinite cycle, and consider $A_k$, $A_\ell$, $A_n$'' you have to say ``let $M$ be a large enough integer, $A_M$ a cycle of length~$M$, and $A_k$, $A_\ell$, $A_n$ sub-cycles of $A_M$''.  All the little arguments (e.g., Fact~\ref{fact:Bn-uniform} below) have to be done in the same way.} %
is a sequence $A = c_{\sq} \in \prod_{n=3}^\infty [n]$.  An infinite cycle~$A$ gives rise to an infinite sequence $A_{\sq}$ of finite cycles (in the usual graph theory sense), defined inductively as follows:
\begin{itemize}
\item $A_3$ is the unique cycle with node set $\{1,2,3\}$;
\item for $n\ge 3$, $A_n$ is the cycle with node set $[n]$ which arises from adding the node $n$ to $A_{n-1}$ by inserting it into (i.e., subdividing) the $c_{n-1}$th cycle-edge.
\end{itemize}
We think of $A_{\sq}$ as a cycle developing over time: At time~$n$, the node~$n$ is added.

We will need to access the neighbors of node~$n$ in~$A_n$, i.e., the ends of the cycle-edge into which~$n$ is inserted (i.e., which is subdivided) when~$n$ is added to~$A_{\sq}$.
We write $\nu_A^+(n)$ for the neighbor of~$n$ in~$A_n$ following~$n$ in the positive direction, and $\nu_A^-(n)$ for the neighbor of~$n$ in~$A_n$ following~$n$ in the negative direction.
The unordered pair $\nu_A(n) = \{ \nu_A^+(n), \nu_A^-(n) \}$ is the $c_{n-1}$th cycle-edge of $A_{n-1}$, the one into which~$n$ was inserted.

These definitions are for $n\ge 4$ but extend naturally for $n=1,2,3$: for $n=3$ we let $\nu_A^+(3)=1$, and $\nu_A^-(3)=2$; for $n=2$, we let $\nu_A^+(2)=\nu_A^-(2)=1$. The equation $\nu_A(n) = \{ \nu_A^+(n), \nu_A^-(n) \}$ holds for $n\ge 2$ (so $\abs{\nu_A(2)}=1$); for $n=1$ we have $\nu_A(1) := \emptyset$.

\begin{remark}[\it Finding $\nu(k)$ for ``old'' nodes~$k$]\label{rem:past-nu}
  It is readily verfied directly from the definition, that, for $k\ge 2$, $\nu^\pm_A(k)$ can be found as follows: start from node~$k$ and walk in positive direction.  The first node smaller than~$k$ which you encounter is $\nu^+_A(k)$.  Similarly, if you walk in negative direction starting from~$k$, the first node smaller than~$k$ which you hit, is $\nu^-_A(k)$.
\end{remark}

A pair of nodes $i,j$ split each cycle $A_{n}$, $n>i,j$ into two (open) segments ($i,j$ do not belong to either segment).  We say that the \textit{segment between $i$ and~$j$} is the one which does \textsl{not} contain the node $\min(\{1,2,3\}\setminus\{i,j\})$ (i.e., 1, unless $1\in\{i,j\}$, in that case, 2, unless $\{1,2\}=\{i,j\}$, in that case~3).  Note that this does not depend on the choice of $n>i,j$, which justifies to say \textit{``the segment of $A_{\sq}$ between $i$ and~$j$''.}

\begin{remark}[\it Testing/finding~$n$ with $\nu(n)=\{i,j\}$]\label{rem:check-if-pair-is-a-nu}
  Given a pair of nodes $\{i,j\}$ and $n'>i,j$, there exists an~$n\le n'$ with $\nu_A(n)=\{i,j\}$ if, and only if, the segment between $i$ and~$j$ on $A_{n'}$ is non empty and every node in it is larger than both $i$ and~$j$.
  In that case, the smallest node, $n$, in the segment between $i$ and~$j$ on $A_{\sq}$ is the one with $\nu(n)=\{i,j\}$.
\end{remark}

\subsection{The Pedigree Graph}
Two infinite cycles $A,B$ give rise to a sequence of graphs $G^{AB}_{\sq}$ which we call the \textit{pedigree graphs.}  We omit the superscripted $A,B$ when possible.  We speak of \textit{vertices} of the pedigree graphs (rather than nodes).  We do this to avoid confusion between the nodes of the cycles $A_{\sq}$,$B_{\sq}$ and the vertices of $G^{AB}_{\sq}$, because the vertex set of $G_n$ is a subset of $\{4,\dots,n\}$, and hence of the node set of $A_n$ and $B_n$.  So a node $k\in [n]$ may or may not be a vertex of $G_n$.

The pedigree graph $G_{n-1}$ is the subgraph of $G_n$ induced by the vertices in $[n-1]$.  In other words, $G_n$ is either equal to $G_{n-1}$ (if $n$ is not a vertex), or it arises from $G_{n-1}$ by adding the vertex~$n$ together with edges between~$n$ and vertices in $[n-1]$.

\begin{example}
  $G_1,G_2,G_3$ are graphs without vertices.  $G_4$ may be a graph without vertices, or it may consist of a single isolated vertex~$4$.  $G_5$ could be a graph without vertices; a graph with a single vertex~$5$; a graph with two isolated vertices $4$, $5$, or a graph with two vertices $4$, $5$, linked by an edge. Check figure \ref{fig.1} for possible $G_4$ and $G_5$.
\end{example}

According to Arthanari~\cite{Arthanari:DiscreteMath:06,Arthanari:DiscreteOpt:13} %\footnote{Our definition of the pedigree graphs of cycles is technically a little different from the combinatorial condition for adjacency in the Pedigree polytope in~\cite{Arthanari:DiscreteMath:06,Arthanari:DiscreteOpt:13} --- but not essentially different.% For the sake of completeness, we show the equivalence of the two conditions in Appendix~\ref{Apx:equiv-adj-defs}.%} %
the condition for the existence of vertices is the following:

\begin{enumerate}[label=(\arabic*)]
\item A node~$n\in[n]$ is a vertex of $G_n$, iff $\nu_A(n) \ne \nu_B(n)$.
\end{enumerate}
There are several conditions for the presence of edges between the vertex~$n$ and earlier vertices.  To make it easier to distinguish these, we speak of edge ``types'' and give the edges implicit ``directions:'' from $A$ to~$B$ or from $B$ to~$A$.  Here are the conditions for edges from~$n$ to earlier vertices.
\begin{enumerate}[resume,label=(\arabic*)]
\item There is a \textit{type-1 edge ``from $A$ to~$B$''} between $n$ and $k\in [n-1]$, if $\nu_A(n) = \nu_B(k)$.  (Note that the condition implies that~$k$ is a vertex.)
\item There is a \textit{type-1 edge ``from $B$ to~$A$''} Ditto, with $A$ and~$B$ exchanged.
\item There is a \textit{type-2 edge ``from $A$ to~$B$''} between $n$ and~$\ell := \max\nu_A(n)$, unless $\displaystyle \nu_B( \ell ) \cap \nu_A(n) \neq \emptyset$.  In other words, suppose the node~$n$ was inserted into the cycle-edge $\{k,\ell\}$ in~$A$, with $k < \ell$.  Now look up the end-nodes of the cycle-edge $\nu_B(\ell)$ into which~$\ell$ was inserted when it was added to~$B$.  Unless~$k$ coincides with one of these end nodes, there is an edge between $n$ and~$\ell$.
\item \textit{Type-2 edge ``from $B$ to~$A$''} Ditto, with $A$ and~$B$ exchanged.
\end{enumerate}

Arthanari's theorem~\cite{Arthanari:DiscreteMath:06} (Theorem~\ref{thm:arthanari}) states that, if $n\ge 4$, and $A_n$, $B_n$ are two cycles with node set $[n]$, then the two vertices of the Pedigree polytope (for~$n$ cities) corresponding to $A_n$ and $B_n$ are adjacent, if, and only if, $G^{AB}_n$ is connected.

We will always think of~$A$ as ``Alice's cycle'' and $B$ as ``Bob's cycle''.

\begin{example}
  Going through an example will help understand the definition of a pedigree graph.  Figure~\ref{fig.1} shows two cycles $A$ and~$B$ evolving over time $n=3,\dots,10$, together with the evolving pedigree graph $G^{AB}_{\sq}$.

  \begin{figure}[hbp]
    \begin{center}
      \begin{tikzpicture}
        % Outer face
        \draw (0,-7) -- (0,13); \draw (0,-7) -- (15,-7); \draw (0,13) -- (15,13); \draw (0,12) -- (15, 12); \draw (15,-7) -- (15,13);
        % inner faces
        % vertical splits
        \draw (1,-7) -- (1,13); \draw (6,-7) -- (6,13); \draw (11,-7) -- (11,13);
        % horizontal splits
        \draw (0,-3.5) -- (15,-3.5); \draw (0,0) -- (15,0); \draw (0,3) -- (15,3); \draw (0,5.5) -- (15,5.5); \draw (0,8) -- (15,8); \draw (0,10) -- (15,10); \draw (0,11) -- (15,11);
        % labels (first row)
        \draw (0.5,12.5) node {$n$}; \draw (3.5,12.5) node {$A_n$}; \draw (8.5,12.5) node {$B_n$}; \draw (13,12.5) node {$G_n^{AB}$};
        % step numbers(first column)
        \draw (0.5,-5.25) node {10}; \draw (0.5,-1.75) node {9}; \draw (0.5,1.5) node {8}; \draw (0.5,4.25) node {7}; \draw (0.5,6.75) node {6}; \draw (0.5,9) node {5}; \draw (0.5,10.5) node {4}; \draw (0.5,11.5) node {3};
        % Skeleton of Alice's cycles
        \draw (3.5,11.5) circle (0.4); \draw (3.5,10.5) circle (0.4); \draw (3.5,9) circle (0.9); \draw (3.5,6.75) circle (1.1); \draw (3.5,4.25) circle (1.1); \draw (3.5,1.5) circle (1.35); \draw (3.5,-1.75) circle (1.6); \draw (3.5,-5.25) circle (1.6);
        % Skeleton of Bob's cycles
        \draw (8.5,11.5) circle (0.4); \draw (8.5,10.5) circle (0.4); \draw (8.5,9) circle (0.9); \draw (8.5,6.75) circle (1.1); \draw (8.5,4.25) circle (1.1); \draw (8.5,1.5) circle (1.35); \draw (8.5,-1.75) circle (1.6); \draw (8.5,-5.25) circle (1.6);
        % nodes of Alice's cycles
        % 3-cycle
        \shade[ball color=green] (3.2, 11.8) circle (2pt); \shade[ball color=green] (3.2, 11.2) circle (2pt); \shade[ball color=green] (3.9, 11.5) circle (2pt); \draw (3,11.8) node {\small 1}; \draw (3,11.2) node {\small 2}; \draw (4.1,11.5) node {\small 3};
        % 4-cycle
        \shade[ball color=green] (3.2, 10.8) circle (2pt); \shade[ball color=green] (3.2, 10.2) circle (2pt); \shade[ball color=green] (3.75, 10.2) circle (2pt); \shade[ball color=green] (3.75, 10.8) circle (2pt); \draw (3,10.8) node {\small 1}; \draw (3,10.2) node {\small 4}; \draw (4,10.2) node {\small 2}; \draw (4,10.8) node {\small 3};
        % 5-cycle
        \shade[ball color=green] (2.9, 9.7) circle (2pt); \shade[ball color=green] (2.9, 8.3) circle (2pt); \shade[ball color=green] (4.1, 8.3) circle (2pt); \shade[ball color=green] (4.4, 9) circle (2pt); \shade[ball color=green] (4.1, 9.7) circle (2pt); \draw (2.7,9.8) node {\small 1}; \draw (2.7,8.2) node {\small 4}; \draw (4.3,8.2) node {\small 5}; \draw (4.6,9) node {\small 2}; \draw (4.3,9.8) node {\small 3};
        % 6-cycle
        \shade[ball color=green] (3, 7.75) circle (2pt); \shade[ball color=green] (2.4, 6.75) circle (2pt); \shade[ball color=green] (3, 5.75) circle (2pt); \shade[ball color=green] (4, 5.75) circle (2pt); \shade[ball color=green] (4.6, 6.75) circle (2pt); \shade[ball color=green] (4, 7.75) circle (2pt); \draw (2.8,7.85) node {\small 1}; \draw (2.2,6.75) node {\small 4}; \draw (2.8,5.65) node {\small 5}; \draw (4.2,5.65) node {\small 2}; \draw (4.8,6.75) node {\small 6}; \draw (4.2,7.85) node {\small 3};
        % 7-cycle
        \shade[ball color=green] (2.95, 5.2) circle (2pt); \shade[ball color=green] (2.4, 4.25) circle (2pt); \shade[ball color=green] (2.95, 3.3) circle (2pt); \shade[ball color=green] (4.1, 3.3) circle (2pt); \shade[ball color=green] (4.57, 4.5) circle (2pt); \shade[ball color=green] (4.57, 4) circle (2pt); \shade[ball color=green] (4.1, 5.2) circle (2pt); \draw (2.75,5.3) node {\small 1}; \draw (2.2,4.25) node {\small 4}; \draw (2.75,3.2) node {\small 7}; \draw (4.35,3.2) node {\small 5}; \draw (4.8,4) node {\small 2}; \draw (4.8,4.5) node {\small 6}; \draw (4.35,5.3) node {\small 3};
	
        % 8-cycle
        \shade[ball color=green] (2.95, 2.75) circle (2pt); \shade[ball color=green] (2.25, 2) circle (2pt); \shade[ball color=green] (2.25, 1) circle (2pt); \shade[ball color=green] (2.95, 0.25) circle (2pt); \shade[ball color=green] (4.05, 0.25) circle (2pt); \shade[ball color=green] (4.75, 2) circle (2pt); \shade[ball color=green] (4.75, 1) circle (2pt); \shade[ball color=green] (4.05, 2.75) circle (2pt); \draw (2.7,2.8) node {\small 1}; \draw (1.95,2) node {\small 4}; \draw (1.95,1) node {\small 7}; \draw (2.7,0.15) node {\small 5}; \draw (4.3,0.15) node {\small 2}; \draw (5,1) node {\small 6}; \draw (5,2) node {\small 8}; \draw (4.3,2.8) node {\small 3};
        % 9-cycle
        \shade[ball color=green] (2.8, -0.3) circle (2pt); \shade[ball color=green] (2, -1.2) circle (2pt); \shade[ball color=green] (2,-2.3) circle (2pt); \shade[ball color=green] (2.8, -3.2) circle (2pt); \shade[ball color=green] (4.2, -3.2) circle (2pt); \shade[ball color=green] (5, -2.3) circle (2pt); \shade[ball color=green] (5.1, -1.75) circle (2pt); \shade[ball color=green] (5, -1.2) circle (2pt); \shade[ball color=green] (4.2, -0.3) circle (2pt); \draw (2.6,-0.2) node {\small 1}; \draw (1.8,-1.2) node {\small 4}; \draw (1.8,-2.3) node {\small 7}; \draw (2.6,-3.3) node {\small 5}; \draw (4.4,-3.3) node {\small 2}; \draw (5.2,-2.3) node {\small 6}; \draw (5.3,-1.75) node {\small 8}; \draw (5.2,-1.2) node {\small 3}; \draw (4.4,-0.2) node {\small 9};
        % 10-cycle
        \shade[ball color=green] (2.9, -3.75) circle (2pt); \shade[ball color=green] (2.2, -4.3) circle (2pt); \shade[ball color=green] (1.9,-5.25) circle (2pt); \shade[ball color=green] (2.2,-6.15) circle (2pt); \shade[ball color=green] (2.9, -6.75) circle (2pt); \shade[ball color=green] (4.8, -6.15) circle (2pt); \shade[ball color=green] (4.1, -6.75) circle (2pt); \shade[ball color=green] (5.1, -5.25) circle (2pt); \shade[ball color=green] (4.8, -4.3) circle (2pt); \shade[ball color=green] (4.1, -3.75) circle (2pt); \draw (2.7,-3.65) node {\small 1}; \draw (2,-4.3) node {\small 4}; \draw (1.7,-5.25) node {\small 7}; \draw (2,-6.15) node {\small 5}; \draw (2.7,-6.85) node {\small 2}; \draw (4.3,-6.85) node {\small 6}; \draw (5,-6.15) node {\small 8}; \draw (5.3,-5.25) node {\small 3}; \draw (5.1,-4.3) node {\small 10}; \draw (4.3,-3.65) node {\small 9};
        % nodes of Bob's cycles
        % 3-cycle
        \shade[ball color=green] (8.2, 11.8) circle (2pt); \shade[ball color=green] (8.2, 11.2) circle (2pt); \shade[ball color=green] (8.9, 11.5) circle (2pt); \draw (8,11.8) node {\small 1}; \draw (8,11.2) node {\small 2}; \draw (9.1,11.5) node {\small 3};
        % 4-cycle
        \shade[ball color=green] (8.2, 10.8) circle (2pt); \shade[ball color=green] (8.2, 10.2) circle (2pt); \shade[ball color=green] (8.75, 10.2) circle (2pt); \shade[ball color=green] (8.75, 10.8) circle (2pt); \draw (8,10.8) node {\small 1}; \draw (8,10.2) node {\small 2}; \draw (9,10.2) node {\small 3}; \draw (9,10.8) node {\small 4};
        % 5-cycle
        \shade[ball color=green] (7.9, 9.7) circle (2pt); \shade[ball color=green] (7.9, 8.3) circle (2pt); \shade[ball color=green] (9.1, 8.3) circle (2pt); \shade[ball color=green] (9.4, 9) circle (2pt); \shade[ball color=green] (9.1, 9.7) circle (2pt); \draw (7.7,9.8) node {\small 1}; \draw (7.7,8.2) node {\small 5}; \draw (9.3,8.2) node {\small 2}; \draw (9.6,9) node {\small 3}; \draw (9.3,9.8) node {\small 4};
        % 6-cycle
        \shade[ball color=green] (8, 7.75) circle (2pt); \shade[ball color=green] (7.4, 6.75) circle (2pt); \shade[ball color=green] (8, 5.75) circle (2pt); \shade[ball color=green] (9, 5.75) circle (2pt); \shade[ball color=green] (9.6, 6.75) circle (2pt); \shade[ball color=green] (9, 7.75) circle (2pt); \draw (7.8,7.85) node {\small 1}; \draw (7.2,6.75) node {\small 5}; \draw (7.8,5.65) node {\small 2}; \draw (9.2,5.65) node {\small 6}; \draw (9.8,6.75) node {\small 3}; \draw (9.2,7.85) node {\small 4};
        % 7-cycle
        \shade[ball color=green] (7.95, 5.2) circle (2pt); \shade[ball color=green] (7.4, 4.25) circle (2pt); \shade[ball color=green] (7.95, 3.3) circle (2pt); \shade[ball color=green] (9.1, 3.3) circle (2pt); \shade[ball color=green] (9.57, 4.5) circle (2pt); \shade[ball color=green] (9.57, 4) circle (2pt); \shade[ball color=green] (9.1, 5.2) circle (2pt); \draw (7.75,5.3) node {\small 1}; \draw (7.2,4.25) node {\small 5}; \draw (7.75,3.2) node {\small 2}; \draw (9.35,3.2) node {\small 6}; \draw (9.8,4) node {\small 3}; \draw (9.8,4.5) node {\small 7}; \draw (9.35,5.3) node {\small 4};
		
        % 8-cycle
        \shade[ball color=green] (7.95, 2.75) circle (2pt); \shade[ball color=green] (7.25, 2) circle (2pt); \shade[ball color=green] (7.25, 1) circle (2pt); \shade[ball color=green] (7.95, 0.25) circle (2pt); \shade[ball color=green] (9.05, 0.25) circle (2pt); \shade[ball color=green] (9.75, 2) circle (2pt); \shade[ball color=green] (9.75, 1) circle (2pt); \shade[ball color=green] (9.05, 2.75) circle (2pt); \draw (7.7,2.8) node {\small 1}; \draw (6.95,2) node {\small 5}; \draw (6.95,1) node {\small 2}; \draw (7.7,0.15) node {\small 6}; \draw (9.3,0.15) node {\small 3}; \draw (10,1) node {\small 7}; \draw (10,2) node {\small 4}; \draw (9.3,2.8) node {\small 8};
        % 9-cycle
        \shade[ball color=green] (7.8, -0.3) circle (2pt); \shade[ball color=green] (7, -1.2) circle (2pt); \shade[ball color=green] (7,-2.3) circle (2pt); \shade[ball color=green] (7.8, -3.2) circle (2pt); \shade[ball color=green] (9.2, -3.2) circle (2pt); \shade[ball color=green] (10, -2.3) circle (2pt); \shade[ball color=green] (10.1, -1.75) circle (2pt); \shade[ball color=green] (10, -1.2) circle (2pt); \shade[ball color=green] (9.2, -0.3) circle (2pt); \draw (7.6,-0.2) node {\small 1}; \draw (6.8,-1.2) node {\small 5}; \draw (6.8,-2.3) node {\small 2}; \draw (7.6,-3.3) node {\small 6}; \draw (9.4,-3.3) node {\small 3}; \draw (10.2,-2.3) node {\small 7}; \draw (10.3,-1.75) node {\small 4}; \draw (10.2,-1.2) node {\small 8}; \draw (9.4,-0.2) node {\small 9};
        % 10-cycle
        \shade[ball color=green] (7.9, -3.75) circle (2pt); \shade[ball color=green] (7.2, -4.3) circle (2pt); \shade[ball color=green] (6.9,-5.25) circle (2pt); \shade[ball color=green] (7.2,-6.15) circle (2pt); \shade[ball color=green] (7.9, -6.75) circle (2pt); \shade[ball color=green] (9.8, -6.15) circle (2pt); \shade[ball color=green] (9.1, -6.75) circle (2pt); \shade[ball color=green] (10.1, -5.25) circle (2pt); \shade[ball color=green] (9.8, -4.3) circle (2pt); \shade[ball color=green] (9.1, -3.75) circle (2pt); \draw (7.7,-3.65) node {\small 1}; \draw (7,-4.3) node {\small 5}; \draw (6.7,-5.25) node {\small 2}; \draw (7,-6.15) node {\small 10}; \draw (7.7,-6.85) node {\small 6}; \draw (9.3,-6.85) node {\small 3}; \draw (10,-6.15) node {\small 7}; \draw (10.3,-5.25) node {\small 4}; \draw (10,-4.3) node {\small 8}; \draw (9.3,-3.65) node {\small 9};
        % Point graph
        % 3-cycle
        \draw (13, 11.5) node{$G_3^{AB}=\emptyset$};
        % 4-cycle
        \shade[ball color=green] (13, 10.5) circle (2pt); \draw (13, 10.2) node{4};
        % 5-cycle
        \draw (12.5, 9) -- node[above] {\tiny $ A\xleftarrow{1} B$} node[below] {\tiny $ A\xrightarrow{2} B$} (13.5, 9); \shade[ball color=green] (12.5, 9) circle (2pt); \shade[ball color=green] (13.5, 9) circle (2pt); \draw (12.3, 8.8) node{4}; \draw (13.7, 8.8) node{5};
        % 6-cycle
        \draw (12.5, 6.75) -- (13.5, 6.75); \shade[ball color=green] (12.5, 6.75) circle (2pt); \shade[ball color=green] (13.5, 6.75) circle (2pt); \draw (12.3,6.55) node{4}; \draw (13.7,6.55) node{5};
        % 7-cycle
        \draw (12.5, 4.75) -- (13.5, 4.75); \draw (12.5, 4.75) -- node[below, sloped] {\tiny $ A\xrightarrow{2} B$} (13, 4); \draw (13, 4) -- node[below, sloped] {\tiny $ A\xleftarrow{2} B$} (13.5,4.75); \shade[ball color=green] (12.5, 4.75) circle (2pt); \shade[ball color=green] (13.5, 4.75) circle (2pt); \shade[ball color=green] (13, 4) circle (2pt); \draw (12.3,4.95) node{4}; \draw (13.7,4.95) node{5}; \draw (13,3.65) node{7};
        % 8-cycle
        \draw (12.5, 1.5) -- (13.5, 1.5); \draw (12.5, 1.5) -- (13, 0.75); \draw (13, 0.75) -- (13.5,1.5); \shade[ball color=green] (12.5, 1.5) circle (2pt); \shade[ball color=green] (13.5, 1.5) circle (2pt); \shade[ball color=green] (13, 0.75) circle (2pt); \shade[ball color=green] (13, 2.25) circle (2pt); \draw (12.3,1.7) node{4}; \draw (13.7,1.7) node{5}; \draw (13,0.45) node{7}; \draw (13,2.45) node{8};
        % 9-cycle
        \draw (12.5, -1.75) -- (13.5, -1.75); \draw (12.5, -1.75) -- (13, -2.5); \draw (13, -2.5) -- (13.5,-1.75); \draw (12.5, -1.75) -- node[above, sloped] {\tiny $ A\xrightarrow{1} B$} (12.5,-1); \draw (12.5, -1) -- node[above, sloped] {\tiny $ A\xleftarrow{2} B$} (13.5,-1); \shade[ball color=green] (12.5, -1.75) circle (2pt); \shade[ball color=green] (13.5, -1.75) circle (2pt); \shade[ball color=green] (13, -2.5) circle (2pt); \shade[ball color=green] (12.5, -1) circle (2pt); \shade[ball color=green] (13.5, -1) circle (2pt); \draw (12.3,-2) node{4}; \draw (13.7,-2) node{5}; \draw (13,-2.8) node{7}; \draw (12.3,-0.7) node{9}; \draw (13.7,-0.7) node{8};
        % 10-cycle
        \draw (12.5, -5.75) -- (13.5, -5.75); \draw (12.5, -5.75) -- (13, -6.5); \draw (13, -6.5) -- (13.5,-5.75); \draw (12.5, -5.75) -- (12.5,-5); \draw (12.5, -5) --(13.5,-5); \draw (12.5, -4.25) --node[above, sloped] {\tiny $ A\xrightarrow{2} B$} (13.5,-5); \shade[ball color=green] (12.5, -5.75) circle (2pt); \shade[ball color=green] (13.5, -5.75) circle (2pt); \shade[ball color=green] (13, -6.5) circle (2pt); \shade[ball color=green] (12.5, -5) circle (2pt); \shade[ball color=green] (13.5, -5) circle (2pt); \shade[ball color=green] (12.5, -4.25) circle (2pt); \draw (12.3,-6) node{4}; \draw (13.7,-6) node{5}; \draw (13,-6.8) node{7}; \draw (12.3,-4.75) node{9}; \draw (13.7,-4.75) node{8}; \draw (12.2,-4) node{10};
      \end{tikzpicture}
    \end{center}
    \caption{Cycles $A_\sq$ and $B_\sq$ and corresponding $G_\sq^{AB}$ }
    \label{fig.1}
  \end{figure}

  \begin{description}
  \item[$n=3$:] As mentioned above, $G^{AB}_{3}$ is a graph without vertices.

  \item[$n=4$:] Alice inserts her new node~$4$ between into the cycle-edge $\{1,2\}$ of her cycle~$A_3$; Bob inserts his new node~$4$ into the cycle-edge $\{1,3\}$ of his cycle~$B_3$.  Hence, $\{1,2\} = \nu_A(4) \ne \nu_B(4) = \{1,3\}$, so vertex~$4$ is added to $G_3^{AB}$.

  \item[$n=5$:] Alice inserts her new node~$5$ into the cycle-edge $\{2,4\}$ of her cycle~$A_4$; Bob inserts his new node~$5$ into the cycle-edge $\{1,2\}$ of his cycle~$B_4$.  Since $\{2,4\} = \nu_A(5) \ne \nu_B(5) = \{1,2\}$, vertex~$5$ is added to $G_4^{AB}$.  Let us check the edges:
    \begin{itemize}
    \item In $B_4$, the segment between $2$ and~$4$ contains the node~3 which is smaller than~$4$.  By Remark~\ref{rem:check-if-pair-is-a-nu}, there is no~$k$ with $\nu_A(5) = \nu_B(k)$, and hence no type-1 edge from $A$ to~$B$ at this time.
    \item As $\nu_B(5) = \nu_A(4)$, there is a type-1 edge between $4$ and~$5$ from $B$ to~$A$.
    \item Since $\max \nu_A(5)=4$ and $\nu_B(4) =\{1,3\} \not\ni 2$, there is also a type-2 edge between $5$ and~$4$ from $A$ to~$B$.
    \item Since $\max \nu_B(5)=2$ and $\nu_A(2) =\{1\} \ni 1$, there is no type-2 edge incident to $5$ from $B$ to $A$. %MOZHGAN: Type-2 edge from $B$ to~$A$.
    \end{itemize}

  \item[$n=6$:] Alices inserts her new node~$6$ into the cycle-edge $\{2,3\}$ of her cycle, Bob inserts his new node~6 into the cycle-edge $\{2,3\}$ of his cycle.  Since $\{2,3\} = \nu_A(6) = \nu_B(6) = \{2,3\}$, we don't have a vertex $6$ in $G_6^{AB}$.

  \item[$n=7$:] Alice throws into $\{4,5\}$, Bob throws into $\{3,4\}$.  Since $\{4,5\} = \nu_A(7) \ne \nu_B(7) = \{3,4\}$, the vertex~$7$ is added to $G_6^{A,B}$.
    \begin{itemize}
    \item In $B_6$, the segment between $4$ and~$5$ contain nodes~3 and~2 which are smaller than~$5$.  By Remark~\ref{rem:check-if-pair-is-a-nu}, there is no~$k$ with $\nu_A(7) = \nu_B(k)$, and hence no type-1 edge from $A$ to~$B$ at this time. % MOZHGAN: Type-1 edge from $A$ to~$B$.
    \item In $A_6$, the segment between $3$ and~$4$ contains the node~2 which is smaller than~$4$.  Again by Remark~\ref{rem:check-if-pair-is-a-nu}, there is no~$k$ with $\nu_B(7) = \nu_A(k)$, and thus no type-1 edge from $B$ to~$A$. % MOZHGAN: Type-1 edge from $B$ to~$A$.
    \item As $\max \nu_A(7)=5$ and $\nu_B(5)=\{1,2\} \not\ni 4$, we have a type-2 edge from $A$ to~$B$ between $7$ and~$5$.
    \item As $\max \nu_B(7)=4$ and $\nu_A(4)=\{1,2\} \not\ni 3$, there is also a type-2 edge from $B$ to~$A$ between $7$ and~$4$.
    \end{itemize}

  \item[$n=8$:] Alice plays $\{3,6\}$, Bob chooses $\{1,4\}$.  Since $\{3,6\} = \nu_A(8) \ne \nu_B(8) = \{1,4\}$, the vertex~$8$ is added to $G_7^{A,B}$.
    \begin{itemize}
    \item In~$B_7$, the segment between $3$ and~$6$ is empty (just the cycle-edge).  By Remark~\ref{rem:check-if-pair-is-a-nu}, there is no~$k$ with $\nu_A(8) = \nu_B(k)$, and hence no type-1 edge from $A$ to~$B$.
    \item For the same reason (segment between $1$ and~$4$ empty), there is no type-1 edge from $B$ to~$A$ incident to the vertex~$8$.
    \item $\max \nu_A(8)=6$ and $\nu_B(6)=\{2,3\} \ni 3$.  So there is no type-2 edge from $A$ to $B$ between $8$ and a smaller vertex.
    \item $\max \nu_B(8)=4$ and $\nu_A(4)=\{1,2\} \ni 1$.  So there is no type-2 edge from $B$ to $A$ between $8$ and a smaller vertex.
    \end{itemize}
    Hence, vertex $8$~is isolated in $G_8$.

  \item[$n=9$:] Alice chooses $\{1,3\}$, Bob chooses $\{1,8\}$.  As $\{1,3\} = \nu_A(9) \ne \nu_B(9) = \{1,8\}$, the vertex~$9$ is added to $G_8^{A,B}$.
    \begin{itemize}
    \item As $\{1,3\} = \nu_A(9) = \nu_B(4) = \{1,3\}$, there is a type-1 edge from $A$ to~$B$ between $9$ and~$4$.
    \item The segment between $1$ and $8$, contains the node $3$ which is smaller then $8$ so there is no type-1 edge from $B$ to~$A$ incident to the vertex~$9$. % MOZHGAN: Type-1 edge from $B$ to~$A$.
    \item As $\max \nu_A(9)=3$ and $\nu_B(3) = \{1,2\} \ni 1$, there is no type-2 edge from $A$ to~$B$ between $9$ and~$3$. % MOZHGAN: Type-2 edge from $A$ to~$B$.
    \item As $\max \nu_B(9)=8$ and $\nu_A(8) = \{3,6\} \not\ni 1$, there is a type-2 edge from $B$ to~$A$ between $9$ and~$8$.
    \end{itemize}

  \item[$n=10$:] Alice chooses $\{3,9\}$, Bob chooses $\{2,6\}$.  Since $\{3,9\} = \nu_A(10) \ne \nu_B(10) = \{2,6\}$, the vertex $10$ is added to $G_9^{A,B}$.
    \begin{itemize}
    \item The segment between $3$ and $9$ in $B$, contains the node $4$ which is smaller then $9$ so there is no type-1 edge from $A$ to~$B$ incident to the vertex~$10$. % MOZHGAN: Type-1 edge from $A$ to~$B$.
    \item Again, in~$B$, the segment between 3 and~9 has a vertex (4) smaller than~9.  Remark~\ref{rem:check-if-pair-is-a-nu} gives us that there is no~$k$ $\nu_B(k) = \nu_A(10)$, so no type-1 edge from $B$ to~$A$ is created.
    \item As $\max \nu_A(10)=9$ and $\nu_B(9) = \{1,8\}\not\ni 3$, there is a type-2 edge from $A$ to~$B$ between $10$ and~$9$.
    \item As $\max \nu_B(10)=6$ and $\nu_A(6) = \{2,3\} \ni 2$, no type-2 edge from $B$ to~$A$ is created.
    \end{itemize}
  \end{description}
\end{example}

\subsection{Rephrasing Theorem \ref{thm:main-polytope}}
We now rephrase Theorem~\ref{thm:main-polytope}, in terms of the pedigree graph.  We also unravel the little-o, and move to the ``Alice-and-Bob'' letters for the cycles.

\begin{theorem}[Theorem~\ref{thm:main-polytope}, rephrased]\label{thm:main:pedigreegraph}
  For every $\eps >0$ there is an integer~$N$ such that for all $n\ge N$ and all cycles $A_n$ with node set $[n]$, if $B_n$ is drawn uniformly at random from all cycles with node set~$[n]$, then
  \begin{equation*}
    \Prb\bigl( \text{ $G^{AB}_n$ is connected } \bigr) \ge 1-\eps.
  \end{equation*}
\end{theorem}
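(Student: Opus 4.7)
I would model the situation as an adaptive stochastic process: at each time $n\ge 4$, Alice adversarially (based on the entire past, including Bob's previous random choices) picks a cycle-edge of $A_{n-1}$ to subdivide, and Bob independently picks a uniformly random cycle-edge of $B_{n-1}$; since Bob's choices are i.i.d.\ uniform, $B_n$ ends up uniformly distributed on cycles on $[n]$. Following the roadmap in the introduction, I would track the pair $(s_n,t_n)$, where $s_n$ is the number of cycle-edges common to $A_n$ and $B_n$, and $t_n$ is the number of connected components of $G^{AB}_n$. The rephrased theorem amounts to showing $\Prb(t_n\le 1)\to 1$, uniformly in Alice's strategies.

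\textbf{Step 1 (keeping $s_n$ small).} First I would prove that $\Prb\bigl(s_m\le L\text{ for all }m\le n\bigr)\ge 1-\eps/2$ for some absolute constant $L$ and all $n$ large. Direct bookkeeping on the cycle-edges destroyed and created by a single insertion shows: if Alice subdivides a common cycle-edge, $s$ increases by $1$ with probability $1/(n-1)$ (Bob matches) and otherwise decreases by roughly $1$; if Alice subdivides a non-common cycle-edge, $s$ only changes by $\Order(1/n)$ in expectation. The resulting supermartingale-like structure, combined with an Azuma/drift argument, yields the desired tail bound regardless of Alice's play.

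\textbf{Step 2 (isolation is rare).} Conditioned on $s_{n-1}$ being small, I would bound the probability that the new vertex $n$ is isolated in $G^{AB}_n$. By Arthanari's rules, isolation requires simultaneous failure of both type-1 conditions (no $k<n$ with $\nu_A(n)=\nu_B(k)$ or $\nu_B(n)=\nu_A(k)$) and both type-2 conditions ($\min\nu_A(n)\in\nu_B(\max\nu_A(n))$ and the symmetric statement). The type-2 condition from $B$ to $A$ alone depends only on Bob's fresh random $\nu_B(n)$ and on the already-committed $\nu_A(\max\nu_B(n))$. A direct computation---using that each pair $\{k,\ell\}$ is a cycle-edge of a uniform $B_{n-1}$ with probability $2/(n-2)$, regardless of Alice's adaptive choice of $\nu_A(\ell)$---shows that the expected number of ``bad'' cycle-edges of $B_{n-1}$ (those with $\min\in\nu_A(\max)$, which would prevent a type-2 edge from $B$ to $A$) is $\Order(1)$. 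Hence Bob's uniformly-random $\nu_B(n)$ lands on a bad edge with probability $\Order(1/n)$, and per-step isolation is $\Order(1/n)$.

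\textbf{Step 3 and main obstacle.} Finally I would combine the above via a Foster--Lyapunov-type drift analysis on $(s_n,t_n)$. The inflow rate into states with $t\ge 2$ is $\Order(1/n)$ by Step 2. The main difficulty is to establish a matching positive ``merger rate'': namely that, uniformly over Alice's strategies, whenever $t_{n-1}\ge 2$, the new vertex $n$ bridges two previously distinct components with probability bounded away from zero. This forces one to understand how Bob's random $\nu_B(n)$ couples to the (Alice-controlled) component structure of $G^{AB}_{n-1}$ through the four edge-creation rules of Arthanari. Once this positive merger rate is in hand, excursions of $(t_n)$ above $1$ have expected length $\Order(1)$, so $\Prb(t_n\ge 2)=\Order(1/n)\to 0$ and the theorem follows.
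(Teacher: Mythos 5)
Your overall frame (track $(s_n,t_n)$, keep $s$ small, bound isolation, then argue components merge) matches the paper's, but your Step~3 rests on a claim that is false and that the paper's argument is specifically engineered to avoid: there is no per-step merger probability bounded away from zero, uniformly over Alice's play. By the transition analysis (Lemma~\ref{lem:transition-prbs}), $T$ can only decrease on a step where Alice subdivides a non-common cycle-edge (a ``d-move''), and even then the guaranteed merge probability is only $(T_n-1)/n=\Theta(1/n)$; on ``c-moves'' $T$ never decreases at all. So excursions of $t$ above $1$ do not have $\Order(1)$ expected length, and the concluding estimate $\Prb(t_n\ge 2)=\Order(1/n)$ does not follow. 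The paper's substitute for your missing merger rate is two-fold: a supermartingale argument on $S$ (Lemma~\ref{lem:Alice-will-play--d--often}) showing that Alice is forced to make $\Omega(n_0)$ d-moves in every window $\{n_0+1,\dots,2n_0\}$, because c-moves deplete the stock of common cycle-edges; this converts the $\Theta(1/n)$ per-step rate into a constant merge probability per doubling window (Lemma~\ref{lem:T-will-decrease}), which is then boosted by iterating over $\Theta(\ln(1/\eps))$ doubling windows (Lemma~\ref{lem:T-will-decrease:boost}).

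There is a second quantitative gap: your Step~2 bound of $\Order(1/n)$ on per-step isolation is too weak to close the argument. Since mergers occur only at rate $\Theta(1/n)$ per step (indeed, only with constant probability per doubling window), an inflow of new components at rate $\Theta(1/n)$ per step would contribute $\Theta(1)$ expected new components in every window $\{n/2,\dots,n\}$, and $\Prb(t_n\ge 2)$ could stay bounded away from $0$. What the paper needs, and proves (Lemmas~\ref{lem:cond-isol-vert} and~\ref{lem:prob-isol-vert}), is that isolation of $n$ requires the \emph{conjunction} of $\nu_A(n)$ being a cycle-edge of $B_{n-1}$ and $\nu_B(n)$ being a cycle-edge of $A_{n-1}$, each event contributing a factor $\Theta(1/n)$, so that $\Prb(I_n)=4/((n-1)(n-2))$. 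Summability of this series gives both $\Exp Y=2$ and, crucially for the endgame, that past any threshold $n_0$ no new component is ever created with probability $1-\Order(1/n_0)$. You state the full isolation criterion correctly but then discard half of it in the estimate; restoring the product structure of the two conditions is essential.
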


In symbols, and using infinite cycles, this reads:
\begin{equation*}
  \forall \eps>0 \;\; \exists N\colon \; \forall A \, \forall n\ge N\colon \Prb( \text{ $G^{AB}_n$ is connected } ) \ge 1-\eps,
\end{equation*}
where the probability is taken over all infinite cycles, see the next section.  A close look at our proof shows that we are actually proving the following stronger statement (we don't have any use for it, though):
\begin{equation*}
  \forall \eps>0 \;\; \exists N\colon \; \forall A\colon \; \Prb\bigl( \text{ $\forall n\ge N$: $G^{AB}_n$ is connected } \bigr) \ge 1-\eps.
\end{equation*}

\section{Pedigree Graphs of Random Cycles}\label{Sec:rnd-cyc-basics}
We have to reconcile uniformly random cycles with the ``evolution over time'' concept of pedigrees.  The definition of an infinite cycle makes that very convenient, just do the same as with infinite sequences of coin tosses: Take, as probability measure on the sample space $\prod_{n=1}^\infty [n]$ of all infinite cycles the product of the uniform probability measures on each of the sets $[n]$, $n\ge 3$.  We refer to the atoms in this probability space as \textit{random infinite cycles.}  The following is a basic property of product probability spaces.  We will use it mostly without mentioning it.

\begin{fact}\label{fact:Bn-uniform}
  If $B$ is a random infinite cycle, then, for each $n\ge 3$, the cycle $B_n$ is uniformly random in the set of all cycles with node set $[n]$.
\end{fact}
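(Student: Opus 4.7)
The plan is to show that the map $\Phi_n \colon \prod_{k=3}^{n-1}[k] \to \mathcal{C}_n$ sending the tuple of insertion positions $(c_3,\dots,c_{n-1})$ to the finite cycle $B_n$ (built according to the inductive definition of $B_\sq$ from $B$) is a bijection onto the set $\mathcal{C}_n$ of all cycles with node set $[n]$. Once this is established, the claim follows automatically: the probability measure on $\prod_{n=3}^{\infty}[n]$ is, by construction, a product of uniform measures, so its projection to $\prod_{k=3}^{n-1}[k]$ is the uniform measure there, and pushing a uniform measure through a bijection yields the uniform measure on the target.

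To prove $\Phi_n$ is a bijection, I would first do the counting: $\babs{\prod_{k=3}^{n-1}[k]} = 3\cdot 4 \cdots (n-1) = (n-1)!/2 = \abs{\mathcal{C}_n}$, so it suffices to establish injectivity (or, equivalently, surjectivity). I would then argue injectivity by induction on $n$, constructing an explicit inverse. The base case $n=3$ is trivial since both sides are singletons. For the inductive step, given a cycle $C \in \mathcal{C}_n$, use Remark~\ref{rem:past-nu} to identify the two neighbors $\nu_B^\pm(n)$ of node $n$ in $C$, delete $n$ from $C$, and contract to obtain a cycle $C' \in \mathcal{C}_{n-1}$ in which the pair $\{\nu_B^+(n),\nu_B^-(n)\}$ is a cycle-edge. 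The index $c_{n-1}\in [n-1]$ is then uniquely recovered as the position of that cycle-edge in the canonical enumeration (counting from the cycle-edge incident on node~1 in the positive direction), and the earlier coordinates $(c_3,\dots,c_{n-2})$ are recovered by the inductive hypothesis applied to $C'$.

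The only slightly delicate point — and likely the one that deserves care — is verifying that the indexing of cycle-edges behaves correctly under the insertion/deletion of a node. Concretely: the positive direction and the cycle-edge numbering are defined by the positions of nodes $1,2,3$, all of which are present in every $B_k$ for $k\ge 3$; hence the ``$c_{n-1}$th cycle-edge of $B_{n-1}$'' is well-defined and, after inserting $n$ into it, the resulting $B_n$ has this same orientation. Conversely, after removing node $n$ from $C$ and merging $\nu_B^+(n)$ and $\nu_B^-(n)$, the positive direction on $C'$ agrees with the one inherited from $C$, so the recovered $c_{n-1}$ is consistent with the forward definition. This verification is mostly bookkeeping rather than a substantive obstacle, and once it is in place the inductive argument and the measure-theoretic conclusion go through immediately.
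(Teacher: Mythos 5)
Your proof is correct: the bijection between insertion sequences $(c_3,\dots,c_{n-1})\in\prod_{k=3}^{n-1}[k]$ and cycles on $[n]$, established by the counting argument $3\cdot 4\cdots(n-1)=(n-1)!/2$ together with the inductive recovery of the insertion data from the neighbors of the largest node, is exactly the standard justification, and the push-forward of the product-of-uniforms measure through this bijection gives the claim. The paper itself offers no proof (it calls this ``a basic property of product probability spaces'' and uses it silently), so your write-up supplies the missing argument in the natural way.
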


\paragraph{Creating isolated vertices.} %
The first substantial result about the connectedness of the pedigree graph, concerns the creation of isolated vertices.

As outlined in the introduction, we study the situation in which Alice chooses her cycle-edge of~$A_{n-1}$ according to a sophisticated strategy, whereas Bob always chooses a uniformly random cycle-edge of $B_{n-1}$ to insert his node~$n$ into (which amounts to his cycle~$B_n$ being uniformly random in the set of all cycles on $[n]$, by Fact~\ref{fact:Bn-uniform}).  In this section, we adopt a purely ``random graph'' perspective.  For fixed~$A$ and random~$B$, the pedigree graphs $G^{AB}_{\sq}$ are a sequence of random graphs, with some weirdo distribution: At time~$n$, whether the new vertex~$n$ is added or not, and if it is, how many incident edges it has, and what their end vertices are --- these are all random events/variables.

For deterministic~$A$ and random~$B$, let the random variable~$Y$ count the total number of times that an isolated vertex of the pedigree graph is created.\label{txt:def-of-Y} In other words, $Y = \sum_{n=4}^\infty \I{I_n}$, where $I_n$ denotes the event that, at time~$n$, $n$ is added as an isolated vertex to $G_n^{A,B}$ (and $\I{\sq}$ is the indicator random variable of the event).

\begin{lemma}\label{lem:isol-vert}
  Whatever Alice does, $\Exp Y = 2$.

  Moreover, for every $\eps>0$, if $n_0 \ge 4/\eps +2$, then, whatever Alice does
  \begin{equation*}
    \Prb\Bigl( \bigcup_{n \ge n_0} I_n \Bigr)  \quad \le \; \eps.
  \end{equation*}
\end{lemma}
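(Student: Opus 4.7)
My plan is to use linearity of expectation, writing $\Exp Y = \sum_{n \ge 4} \Prb(I_n)$, and to bound each summand uniformly in Alice's strategy. First I would unpack the event $I_n$ via Arthanari's adjacency rules: writing $\nu_A(n) = \{k,\ell\}$ with $k<\ell$ and $\nu_B(n) = \{k',\ell'\}$ with $k'<\ell'$, the vertex $n$ is added as an isolated vertex iff all five conditions hold: \emph{(i)} $\nu_B(n) \ne \nu_A(n)$; \emph{(ii)} no $j<n$ has $\nu_B(j)=\nu_A(n)$; \emph{(iii)} no $j<n$ has $\nu_A(j)=\nu_B(n)$; \emph{(iv)} $k \in \nu_B(\ell)$; \emph{(v)} $k' \in \nu_A(\ell')$. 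Conditions (ii) and (iv) are measurable with respect to $\mathcal F_{n-1} := \sigma(A_3,\ldots,A_{n-1},B_3,\ldots,B_{n-1})$ and Alice's current choice $\nu_A(n)$, while (i), (iii), (v) depend on Bob's fresh uniform variable $\nu_B(n)$.

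Next I would condition on $(\mathcal F_{n-1},\nu_A(n))$ and analyse the two independent pieces of Bob's randomness that drive the bound. Condition (iv) is a deterministic condition on $\nu_B(\ell)$, which was chosen uniformly over the $\ell-1$ cycle-edges of $B_{\ell-1}$; only two of those contain the specific node $k$, so (iv) has conditional probability at most $2/(\ell-1)$. Condition (v) restricts $\nu_B(n)$ to the cycle-edges of $B_{n-1}$ whose smaller endpoint lies in the two-element set $\nu_A(\ell')$; combined with conditions (i), (iii) that further remove admissible edges, and with the $1/(n-1)$ uniform weight on $\nu_B(n)$, a careful edge count should give
\begin{equation*}
  \Prb\bigl(I_n \,\bigm|\, \mathcal F_{n-1},\,\nu_A(n)\bigr) \;\le\; \frac{4}{(n-1)(n-2)}
\end{equation*}
uniformly over Alice's adaptive strategies.

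Telescoping, $\sum_{n\ge 4}\tfrac{4}{(n-1)(n-2)} = 4\sum_{n\ge 4}\bigl(\tfrac{1}{n-2}-\tfrac{1}{n-1}\bigr) = 2$, so $\Exp Y \le 2$; matching equality should come from the tight value $\Prb(I_4)=2/3$ (computed directly, since vertex $4$ is automatically isolated whenever it is added) together with a matching lower bound at each $n$. The tail estimate is then immediate from the same telescoping via the union bound: $\Prb\bigl(\bigcup_{n \ge n_0} I_n\bigr) \le \sum_{n \ge n_0} \Prb(I_n) \le \tfrac{4}{n_0-2} \le \eps$ as soon as $n_0 \ge 4/\eps + 2$.

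The main obstacle will be proving the conditional bound uniformly against an adaptive Alice. She sees $B_{n-1}$ in full and can try to coordinate her past choices $\nu_A(\ell')$ with the structure of $B_{n-1}$, and her current move $\{k,\ell\}$ with the already-realised $\nu_B(\ell)$. The argument must disentangle these coordinations by exploiting that the two random variables $\nu_B(\ell)$ and $\nu_B(n)$ (entering conditions (iv) and (v) respectively) are Bob's \emph{independent} uniform choices at the disjoint times $\ell$ and $n$, neither of which is under Alice's control; this is what prevents multiplicative amplification beyond the $\tfrac{2}{\ell-1}\cdot\tfrac{2}{n-1}\le \tfrac{4}{(n-1)(n-2)}$ factor.
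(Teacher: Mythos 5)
Your overall skeleton --- linearity of expectation, a per-$n$ bound $\Prb(I_n)\le \frac{4}{(n-1)(n-2)}$, and the telescoping sum for both the expectation and the tail --- matches the paper's proof (the tail estimate and the value $2$ are verbatim the same computation). The gap is in the one step that carries all the content: the per-$n$ bound. You correctly observe in your first paragraph that your condition (iv), namely $k\in\nu_B(\ell)$, is measurable with respect to $(\mathcal F_{n-1},\nu_A(n))$, but in the next paragraph you assign it conditional probability at most $2/(\ell-1)$ by appealing to the uniformity of Bob's choice at time $\ell$. These two statements are incompatible: conditioned on $(\mathcal F_{n-1},\nu_A(n))$ the event (iv) is deterministic, and an adaptive Alice --- who sees $B_{n-1}$, hence $\nu_B(\ell)$, \emph{before} choosing $\{k,\ell\}$ --- can simply arrange $k\in\nu_B(\ell)$ whenever her cycle permits it. The independence of $\nu_B(\ell)$ and $\nu_B(n)$ that you invoke in your last paragraph does not repair this, because the dangerous correlation runs through Alice's move at time $n$, not through Bob's two moves. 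Even setting adaptivity aside, $\frac{2}{\ell-1}\cdot\frac{2}{n-1}\le\frac{4}{(n-1)(n-2)}$ requires $\ell\ge n-1$, whereas $\ell=\max\nu_A(n)$ is typically much smaller, so the quantitative claim fails on its own terms.

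The paper's proof avoids analyzing Bob's historical decision at time $\ell$ altogether. Lemma~\ref{lem:segment-condition} shows that the \emph{conjunction} of your conditions (ii) and (iv) --- no type-1 and no type-2 edge from $A$ to $B$ at $n$ --- is equivalent to the single condition that $\nu_A(n)$ is a cycle-edge of $B_{n-1}$, and symmetrically for (iii) and (v). This turns $I_n$ into ``$\nu_A(n)\in E(B_n)$ and $\nu_B(n)\in E(A_n)$'' (Lemma~\ref{lem:cond-isol-vert}), and it is the probability that a \emph{given pair of nodes} is a cycle-edge of a uniformly random cycle that produces the denominators $n-1$ and $n-2$, via the conditional computation $\Prb(E_1)\Prb(E_2\mid E_1)=\frac{2}{n-1}\cdot\frac{2}{n-2}$. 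Without this reformulation your five conditions do not decouple, and I do not see how your ``careful edge count'' could be completed. (A minor separate point: the exact equality $\Exp Y=2$ needs $\Prb(I_n)=\frac{4}{(n-1)(n-2)}$ for every $n$ and every strategy, which your sketch does not attempt; since only the upper bound is used downstream, this is secondary to the main gap.)
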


(We have collected the proofs for this section in the following section, in order not to interrupt the motivating explanations.)

To understand why the lemma is important, consider a pedigree graph at time $n$, just before Alice and Bob make their choices of cycle-edges into which their respective new nodes~$n$ are inserted.  If~$n$ is not a vertex of the new pedigree graph~$G_n$, the number of connected components of~$G_{\sq}$ doesn't change.  If~$n$ is a vertex, and and it does have incident edges, then the number of connected components can only decrease.  The only way that the number of connected components of~$G_n$ can increase is if~$n$ is an isolated vertex in the new pedigree graph.  Hence, Lemma~\ref{lem:isol-vert} provides an upper bound on the expected number of connected components, uniform over~$n$.

\paragraph{The Intuition.}\label{paragraph:intuition} %
From Lemma~\ref{lem:isol-vert}, it is unlikely that the pedigree graph will have many components.  Indeed, intuitively, if only~2 isolated vertices are ever created, that means that most of the time either nothing happens (no new vertex) or edges are created, ultimately reducing the number of components, so the pedigree graph is connected.

While this basic intuition is essentially correct, a closer look reveals some subtleties.  First of all, Alice has a big sway in choosing the end vertices of new edges: she can pick the end vertices of type-2 edges from $A$ to~$B$; and she can influence the end vertices of type-1 edges (both directions).  %Now, if an isolated vertex is created at some large enough time~$n_0$, the chance of hitting the small component containing~$n_0$ at a later time~$n$ while at the same time having a second edge from~$n$ into another component might just be small enough for Alice to win the game.

Secondly, Bob's choices are reduced by the low degrees of the vertices.  (A stronger version of~(a) is proved as Lemma~\ref{lem:segment-condition} in Section~\ref{pf-sec:basics}; we do not prove the rest of the lemma because we do not need it for our proof.)

\begin{lemma}
  The maximum degree of a vertex in a pedigree graph is at most~6:
  \begin{enumerate}[label=(\alph*),nosep]
  \item up to~2 to vertices created in the past; and
  \item up to~6 to future vertices.
  \end{enumerate}
\end{lemma}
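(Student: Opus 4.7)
My plan is to prove the two bounds separately. For part~(a), the key observation will be a \emph{mutual exclusion} phenomenon: at the moment a new vertex~$n$ is created, the type-1 and type-2 edges ``in the A-to-B direction'' cannot both exist, and similarly in the B-to-A direction. For part~(b), I will count, for each of the four edge types, the number of future $m>n$ that can create an edge back to~$n$.

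For~(a), write $\nu_A(n)=\{i,\ell\}$ with $i<\ell$. A type-1 A-to-B edge requires some $k<n$ with $\nu_B(k)=\{i,\ell\}$; in particular, $\{i,\ell\}$ is a cycle-edge of $B_{k-1}$. Since both $i$ and $\ell$ are already present in $B_\ell$, and since a subdivision only adds new adjacencies between the inserted node and the two endpoints of the subdivided cycle-edge (never new adjacencies among already-present nodes), the pair $\{i,\ell\}$ must in fact already be a cycle-edge of $B_\ell$, i.e.\ $i\in\nu_B(\ell)$. But a type-2 A-to-B edge requires exactly the opposite condition $\nu_B(\ell)\cap\nu_A(n)=\emptyset$, hence $i\notin\nu_B(\ell)$. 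The two conditions are incompatible, so at most one A-to-B edge is created at~$n$. By swapping~$A$ and~$B$ we get the symmetric statement, and the two together yield the bound~$2$.

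For~(b), I count future edges incident to~$n$ by type. Type-1 A-to-B: a future $m>n$ with $\nu_A(m)=\nu_B(n)$; by Remark~\ref{rem:check-if-pair-is-a-nu} there is at most one such~$m$. Type-1 B-to-A is symmetric. Type-2 A-to-B: a future~$m$ with $\max\nu_A(m)=n$, i.e.\ $m$ is inserted into a cycle-edge of $A_{m-1}$ whose larger endpoint is~$n$. In $A_n$, node~$n$ has exactly two incident cycle-edges, with other endpoints $\nu_A(n)=\{p,q\}$; on each of these two sides only the \emph{first} future insertion keeps the other endpoint~$<n$, because any later insertion on that side has a future (hence larger than~$n$) node as its other endpoint. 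So at most two future~$m$ yield a type-2 A-to-B edge to~$n$, and by symmetry at most two yield a type-2 B-to-A edge. Summing gives $1+1+2+2=6$.

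The main obstacle is the mutual-exclusion argument in~(a): although the subdivision property itself is elementary, applying it to convert the type-2 condition $i\notin\nu_B(\ell)$ into the impossibility of any $k$ with $\nu_B(k)=\{i,\ell\}$ is the crucial step. The remaining counts in~(b) are routine bookkeeping using Remark~\ref{rem:check-if-pair-is-a-nu} and the fact that a node has exactly two neighbors in a cycle.
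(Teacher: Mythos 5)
Your arguments for the two itemized bounds are correct. For part~(a), your mutual-exclusion argument is a valid alternative to the paper's: the paper proves a strengthening of~(a) as Lemma~\ref{lem:segment-condition}, reading the incompatibility of the type-1 and type-2 conditions (in a fixed direction) off the ``segment'' characterization of Remarks~\ref{rem:past-nu} and~\ref{rem:check-if-pair-is-a-nu}, whereas you derive it from the observation that a subdivision never creates an adjacency between two already-present nodes; the two routes are essentially equivalent. For part~(b) the paper gives no proof at all (it states explicitly that only a strengthening of~(a) is proved, the rest being unneeded for the main theorem), and your counts --- at most one future $m$ per type-1 direction by uniqueness of the node inserted into a given cycle-edge, at most two per type-2 direction because only the first insertion on each side of~$n$ keeps the smaller endpoint --- are correct.

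However, the headline claim of the lemma is that the maximum degree is at most~$6$, and your proof does not establish this: items (a) and~(b) together only give $2+6=8$. The missing ingredient is a cross-exclusion between past edges in one direction and future type-1 edges in the opposite direction. A past edge ``from $A$ to~$B$'' at~$n$ exists precisely when $\nu_A(n)$ is \emph{not} a cycle-edge of $B_{n-1}$ (Lemma~\ref{lem:segment-condition}); a future type-1 edge ``from $B$ to~$A$'' at~$n$ requires some $m>n$ with $\nu_B(m)=\nu_A(n)$, hence requires $\nu_A(n)$ to be a cycle-edge of $B_{m-1}$ and therefore --- by the very ``adjacencies are never created'' fact you use in~(a) --- a cycle-edge of $B_{n-1}$. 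So these two are mutually exclusive, and symmetrically with $A$ and~$B$ exchanged. Grouping the eight candidate edges as (past $A$-to-$B$ \emph{or} future type-1 $B$-to-$A$) $+$ (past $B$-to-$A$ \emph{or} future type-1 $A$-to-$B$) $+$ (future type-2 $A$-to-$B$) $+$ (future type-2 $B$-to-$A$) then gives $1+1+2+2=6$. Without this step you have only bounded the degree by~$8$.
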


Hence, if a vertex~$n_0$ was created as an isolated vertex or landed in a small connected component, Bob has only 4--6 shots at connecting it to another connected component.
The good news is that Alice can never ``shut down'' a connected component completely: Bob can always extend it by one more vertex.

\begin{lemma}\label{lem:connect-to-component}
  Let $C$ be a connected component of the pedigree graph $G^{AB}_{n-1}$.  There exists a $k\in C$ such that, no matter what Alice's move is at time~$n$, Bob has a move which creates the vertex~$n$ and makes it adjacent to~$k$.
\end{lemma}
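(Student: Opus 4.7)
The plan is to show that $k := \max C$ always works, with Bob's fixed move $e_B := \{k, m\}$ for a suitable $m$. Writing $M_k$ for the set of neighbors of $k$ in $B_{n-1}$ that are smaller than $k$, the whole argument reduces to proving $M_k \setminus \nu_A(k) \ne \emptyset$. Given any $m \in M_k \setminus \nu_A(k)$, the move $e_B = \{k, m\}$ works regardless of Alice: it is a cycle-edge of $B_{n-1}$ by $m \in M_k$; the ``small-neighbor observation'' (any neighbor of $k$ in $A_{n-1}$ smaller than $k$ must lie in $\nu_A(k)$, since insertions at times $> k$ introduce only larger nodes) combined with $m \notin \nu_A(k)$ prevents $\{k, m\}$ from being a cycle-edge of $A_{n-1}$, so $e_A \ne e_B$ for every legal Alice move $e_A$; finally, since $\max \nu_B(n) = k$ and $\nu_A(k) \cap \{k, m\} = \emptyset$, the type-2 B-to-A rule at time $n$ creates the edge $\{n, k\}$. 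Crucially, this move depends only on $A_{n-1}$ and $B_{n-1}$, not on Alice's current choice.

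To prove the reduced claim, I argue by contradiction, assuming $M_k \subseteq \nu_A(k)$. The small-neighbor observation applied to $B$ also gives $M_k \subseteq \nu_B(k)$, so $M_k \subseteq \nu_A(k) \cap \nu_B(k)$; combined with $\nu_A(k) \ne \nu_B(k)$ (since $k$ is a vertex of the pedigree graph), this forces $|M_k| \le 1$. The core sublemma is the following: if $k$'s current neighbor in $B_{n-1}$ on one side is larger than $k$ (that side is ``displaced''), then the original $\nu_B^{\pm}(k)$ on that side lies in $\nu_A(k)$. To prove it, let $t_1$ be the smallest node with $\nu_B(t_1) = \{k, \nu_B^{\pm}(k)\}$ --- the first subdivider inserted between $k$ and its original $\nu_B$-neighbor on that side. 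Suppose, for contradiction, $\nu_B^{\pm}(k) \notin \nu_A(k)$. Then $\max \nu_B(t_1) = k$ and $\nu_A(k) \cap \nu_B(t_1) = \emptyset$, so the type-2 B-to-A rule at time $t_1$ would create the edge $\{t_1, k\}$ whenever $t_1$ is a vertex of $G^{AB}$. But if $t_1$ is \emph{not} a vertex, then $\nu_A(t_1) = \nu_B(t_1) = \{k, \nu_B^{\pm}(k)\}$, making $\{k, \nu_B^{\pm}(k)\}$ a cycle-edge of $A_{t_1 - 1}$, and the small-neighbor observation on $A$ then forces $\nu_B^{\pm}(k) \in \nu_A(k)$, contradicting the hypothesis. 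Hence $t_1$ is a vertex and the edge $\{t_1, k\}$ sits in $G^{AB}$, so $t_1 \in C$ with $t_1 > k = \max C$, also a contradiction. Thus $\nu_B^{\pm}(k) \in \nu_A(k)$.

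With the sublemma, I finish as follows. If $|M_k| = 0$, both sides of $k$ are displaced, so $\nu_B^+(k), \nu_B^-(k) \in \nu_A(k)$, giving $\nu_B(k) \subseteq \nu_A(k)$ and hence $\nu_A(k) = \nu_B(k)$, contradicting that $k$ is a vertex. If $|M_k| = 1$, WLOG the negative side is not displaced, so $\nu_B^-(k) \in M_k \subseteq \nu_A(k)$, while on the displaced positive side the sublemma gives $\nu_B^+(k) \in \nu_A(k)$; the same contradiction arises. Therefore $M_k \setminus \nu_A(k) \ne \emptyset$, completing the proof. The main obstacle I anticipate is the sublemma, and within it the sub-case where the first subdivider $t_1$ is not itself a vertex of $G^{AB}$: there one cannot appeal to the maximality of $k$ in $C$ and must instead exploit the fact that $t_1$'s coincident insertion in $A$ forces the relevant cycle-edge to persist in $A$, triggering the small-neighbor observation in a second form.
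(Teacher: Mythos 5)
Your proof is correct and is essentially the paper's argument run in contrapositive form: both take $k=\max V(C)$, observe that a side of $k$ in $B$ whose original $\nu_B$-neighbour lies outside $\nu_A(k)$ cannot yet have been subdivided by Bob --- since the first subdivider would acquire a type-2 edge to $k$ and be a vertex of $C$ larger than $k$ --- and then let Bob insert $n$ into that still-intact cycle-edge. Your version merely spells out details the paper leaves implicit (that the first subdivider $t_1$ must be a vertex, and that Alice cannot prevent $n$ from being a vertex), which is fine.
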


However, for Bob to make a disconnected pedigree graph connected, at some time, he will have to manage to insert his new node in such a way that it has two incident edges, linking two connected components at the same time.

There is no difficulty in realizing that Alice wouldn't stand a chance against a strategically playing Bob.  But we claim that the game between a clever Alice and a blindfolded Bob will turn in Bob's favour almost all of the time.

\smallskip\noindent%
Computer simulations give another indication that some care has to be taken implementing the basic intuition: Even for~$n$ as large as~100, even if Alice's cycle is chosen uniformly \textsl{at random} instead of adversarial, the frequency (in 100000 samples) with which we saw a connected pedigree graph was only about 84\%.  In the remaining 16\% of cases, the typical situation is that of one giant connected component containing almost every vertex, and one tiny component growing only very slowly.
%
% \smallskip\noindent%
This indicates that even a \textsl{disinterested} Alice can do some damage.

\section{Basic Properties of the Pedigree Graph}\label{pf-sec:basics}
\begin{lemma}\label{lem:segment-condition}
  Let $A,B$ be two infinite cycles, and $n\ge 4$.
  \begin{enumerate}[label=\alph{*}.]
  \item\label{lem:segment-condition:cycle-edge---no-ped-edge} If $\nu_A(n)$ is a cycle-edge of $B_{n-1}$, then, if~$n$ is a vertex, there is no edge in $G^{AB}_n$ ``from $A$ to~$B$'' incident on~$n$.
  \item\label{lem:segment-condition:no-cycle-edge---ped-edge} If $\nu_A(n)$ is not a cycle-edge of $B_{n-1}$, then~$n$ is a vertex, and the pedigree graph $G^{AB}_n$ has an edge ``from $A$ to~$B$'' incident on~$n$:
    \begin{enumerate}[label*=\arabic{*}.]
    \item There is a type-1 edge, if and only if every node in the segment on~$B_{\sq}$ between $\nu^+_A(n)$ and $\nu^-_A(n)$ is larger than these two\footnote{Remember that segments are ``open'': they don't include the end nodes.}.  In this case, the other end of the type-1 edge is smallest node in the segment on~$B$ between $\nu^+_A(n)$ and $\nu^-_A(n)$.
    \item There is a type-2 edge, if and only if there is a node in the segment on~$B_{\sq}$ between $\nu^+_A(n)$ and $\nu^-_A(n)$ which is less than at least one of the two nodes.  In this case, the other end of the type-2 edge is $\max(\nu^+_A(n),\nu^-_A(n))$.
    \end{enumerate}
  \end{enumerate}
  In particular, there can be at most one edge ``from $A$ to~$B$'' between~$n$ and a vertex $k<n$.
\end{lemma}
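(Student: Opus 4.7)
My plan is to case-split on whether $\{u,v\}$ is a cycle-edge of~$B_{n-1}$, writing $u:=\nu_A^+(n)$, $v:=\nu_A^-(n)$, and $\ell:=\max(u,v)$. Type-1 edges are handled uniformly by Remark~\ref{rem:check-if-pair-is-a-nu}, while type-2 edges are handled by Remark~\ref{rem:past-nu} combined with walk arguments from~$\ell$ along the two arcs of $B_{n-1}$ cut out by $\{u,v\}$ (the segment and its complementary arc). Note that $\nu_B(n)$ is automatically a cycle-edge of~$B_{n-1}$, so whenever $\nu_A(n)=\{u,v\}$ is not such a cycle-edge we get $\nu_A(n)\ne\nu_B(n)$ for free, making~$n$ a vertex.

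For part~(a) the segment between $u$ and~$v$ in $B_{n-1}$ is empty, so Remark~\ref{rem:check-if-pair-is-a-nu} forbids any $k\le n-1$ with $\nu_B(k)=\{u,v\}$: no type-1 edge exists. For type-2, the adjacency of $\ell$ and~$v$ in $B_{n-1}$ immediately puts $v$ into $\nu_B(\ell)$ via Remark~\ref{rem:past-nu}, so $\nu_B(\ell)\cap\{u,v\}\ne\emptyset$ and the type-2 edge is suppressed. In part~(b) the segment is non-empty, and cases (b.1) and (b.2) form an exhaustive dichotomy since no segment node equals $\ell$. In (b.1), Remark~\ref{rem:check-if-pair-is-a-nu} yields a unique smallest segment node~$k$ with $\nu_B(k)=\{u,v\}$, which is the type-1 endpoint; moreover the walk from~$\ell$ along the segment passes only nodes larger than~$\ell$ before hitting~$v$, so $v\in\nu_B(\ell)$ and no type-2 edge exists. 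In (b.2), Remark~\ref{rem:check-if-pair-is-a-nu} again kills the type-1 edge, and the type-2 edge reduces to verifying $v\notin\nu_B(\ell)$.

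The heart of the argument is this last claim. The segment side of~$\ell$ is easy: since (b.2) provides a segment node~$<\ell$, the walk from~$\ell$ along the segment hits an interior segment node before reaching~$v$, so the segment-side neighbour in $\nu_B(\ell)$ is not~$v$. The complementary side hinges on the observation that case~(b.2) forces $\ell\ge 4$: if $\ell\le 3$ then $\{u,v\}\subseteq\{1,2,3\}$, and any potential segment node strictly less than~$\ell$ must lie in $\{1,2,3\}\setminus\{u,v\}$ and be different from~$v$, which forces it to equal $p:=\min(\{1,2,3\}\setminus\{u,v\})$---but $p$ sits on the complementary arc by the very definition of ``segment''. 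With $\ell\ge 4$, this $p\le 3$ is an interior node of the complementary arc, distinct from $u,v$ and satisfying $p<\ell$, so the first smaller-than-$\ell$ node on the complementary walk is interior, hence $\ne v$. The ``at most one edge'' clause then follows immediately, since each of the three cases produces at most a single edge from~$A$ to~$B$ incident on~$n$. The main obstacle is keeping endpoint-versus-interior bookkeeping straight in the walk arguments and handling the small-$\ell$ edge cases cleanly.
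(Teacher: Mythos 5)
Your proof is correct and follows essentially the same route as the paper's (which compresses everything into two sentences): $n$ is a vertex because $\nu_A(n)\ne\nu_B(n)$, type-1 edges are controlled by Remark~\ref{rem:check-if-pair-is-a-nu}, and type-2 edges by Remark~\ref{rem:past-nu} together with the observation that the two ends of $\nu_A(n)$ are separated on $B$ by smaller nodes. Your version merely spells out the walk arguments and the small-$\ell$ edge case that the paper leaves implicit; the only cosmetic blemish is that you overload the symbol $v$ to mean both $\nu_A^-(n)$ and $\min(\nu_A^+(n),\nu_A^-(n))$.
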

% \TODOtxt{BETTER PROOF?? \dots only way to avoid a type-2 edge ``from $B$ to~$A$'' is if $\nu_A(m) \cap \{k,\ell\} \ne \emptyset$, which by Remark~\ref{rem:check-if-pair-is-a-nu} means that, on~$A$, between $k$ and~$\ell$, every node is larger than~$m$.  In that case, again using the remark, there is a type-1 edge from $m$ to the smallest of these nodes (on~$A$, between $k$ and~$\ell$)\dots}
\begin{proof}
  First of all, that~$n$ is a vertex follows immediately, as $\nu_A(n) = \nu_B(n)$ is not possible.

  The statements about the edges follow immediately form the definitions of the edges of the pedigree graph: For the first item, use Remark~\ref{rem:check-if-pair-is-a-nu}; for the second item, by the definition of ``segment between'', the two nodes $\nu^+_A(n)$ and $\nu^-_A(n)$ are separated on~$B$ by nodes smaller than themselves, so that none of them can be in the $\nu_B(\cdot)$-set of the other, by Remark~\ref{rem:past-nu}.
\end{proof}

\begin{lemma}\label{lem:cond-isol-vert}
  Let $A,B$ be two infinite cycles, and $n\ge 4$.
  Then~$n$ is an isolated vertex in $G_n$ if, and only if, both
  \begin{enumerate}[label=(\arabic*)]
  \item $\nu_A(n)$ is a cycle-edge of~$B_n$, and
  \item $\nu_B(n)$ is a cycle-edge of $A_n$.
  \end{enumerate}
\end{lemma}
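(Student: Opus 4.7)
The plan is to derive this essentially as a corollary of Lemma~\ref{lem:segment-condition}. The key observation is that being an isolated vertex in $G_n$ means three things: $n$ is a vertex (i.e.\ $\nu_A(n)\ne\nu_B(n)$); there is no edge ``from $A$ to $B$'' incident on $n$; and there is no edge ``from $B$ to $A$'' incident on $n$. Since the whole setup is symmetric under swapping $A$ and $B$, I only need to argue the $A$-to-$B$ side in detail; the $B$-to-$A$ side follows by swapping.

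A first small step is a bridge between the statement's use of $B_n$ and Lemma~\ref{lem:segment-condition}'s use of $B_{n-1}$: since $B_n$ arises from $B_{n-1}$ by subdividing the cycle-edge $\nu_B(n)$, a pair is a cycle-edge of $B_n$ if and only if it is a cycle-edge of $B_{n-1}$ distinct from $\nu_B(n)$. Consequently, ``$\nu_A(n)$ is a cycle-edge of $B_n$'' already forces $\nu_A(n)\ne\nu_B(n)$ (so $n$ is a vertex), and under that vertex condition it is equivalent to ``$\nu_A(n)$ is a cycle-edge of $B_{n-1}$''. The analogous equivalence holds with the roles of $A$ and $B$ swapped.

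For the forward direction, suppose $n$ is an isolated vertex. Then in particular $n$ is a vertex and has no $A$-to-$B$ edge, so by the contrapositive of Lemma~\ref{lem:segment-condition}.\ref{lem:segment-condition:no-cycle-edge---ped-edge}, the pair $\nu_A(n)$ must be a cycle-edge of $B_{n-1}$; together with $\nu_A(n)\ne\nu_B(n)$, the bridge upgrades this to condition~(1). Swapping $A$ and $B$ yields condition~(2). For the reverse direction, assume (1) and~(2). By the bridge, $n$ is a vertex and $\nu_A(n)$ is a cycle-edge of $B_{n-1}$, so Lemma~\ref{lem:segment-condition}.\ref{lem:segment-condition:cycle-edge---no-ped-edge} rules out any $A$-to-$B$ edge incident on $n$; the same argument with $A$ and $B$ swapped rules out any $B$-to-$A$ edge on $n$, and hence $n$ is isolated.

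There is no serious obstacle here; this is a straightforward unpacking of the previous lemma. The only real point of care is the $B_n$ versus $B_{n-1}$ distinction and ensuring that the vertex condition $\nu_A(n)\ne\nu_B(n)$ is supplied at the right moments, which is exactly what the bridge paragraph handles.
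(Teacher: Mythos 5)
Your proof is correct and takes essentially the same route as the paper: both directions reduce to Lemma~\ref{lem:segment-condition} (applied in each direction $A\to B$ and $B\to A$), mediated by the observation that a pair of nodes below $n$ is a cycle-edge of $B_n$ iff it is a cycle-edge of $B_{n-1}$ different from $\nu_B(n)$. Your ``bridge'' paragraph in fact makes explicit a point the paper's proof leaves implicit, namely why conditions (1) and (2) already force $\nu_A(n)\ne\nu_B(n)$, so that $n$ is a vertex at all.
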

\begin{proof}
  That the conditions (1) and~(2) are sufficient for~$n$ to be an isolated vertex is readily verified: Since $\nu_A(n)$ is a cycle-edge of $B_n$, it must still have been a cycle-edge of~$B_{n-1}$; similarly $\nu_B(n)$ was a cycle-edge of $A_{n-1}$.  Type-1 edges are immediately excluded.  As for type-2 edges, the condition $\nu_A( \max \nu_B(n) ) \cap \nu_B(n)$ is trivially satisfied.

  For the necessity, suppose (by symmetry) that $\nu_A(n)$ is not a cycle-edge of $B_n$.  Then either it was a cycle-edge of $A_{\sq}$ at time $n-1$ and got destroyed, or it wasn't a cycle-edge of $A_{\sq}$ at time $n-1$ in the first place.

  In the former case, the cycle-edge must have been destroyed by through $\nu_B(n) = \nu_A(n)$.  But this means that $n$ is not a vertex of the pedigree graph.

  In the latter case, Lemma~\ref{lem:segment-condition} applies, and~$n$ is a vertex, but it is not isolated.
\end{proof}
% For creating an isolated vertex, we have to avoid type-1 and type-2 edges from~$n$ to other vertices.
%
% Let us define $\{k,\ell\} := \nu_A(n)$, with $k<\ell$.
%   %%
% The potential end-vertex of the type-2 edge from $B$ to~$A$ would be~$\ell$.  Avoiding this edge requires that $\nu_B( k ) \cap \nu_A(n) = \emptyset$
%
%
% , we have to have $k\in \nu_B(\ell)$ and it means that we insert $\ell$ next to $k$ in the Hamiltonia cycle $B$ and has the edge $(k,l)$ in step $k$.  Now if we insert any vertex $k<j<n$ on the edge $(k,l)$ in $B$, it gives a type one edge $(n\xleftarrow[A~B]{1}j)$.  Because $\nu_A(n)=\nu_B(j)$.  The same argument holds when we look at the vertex $n$ in the Hamiltonia cycle $B$ and the proof is complete.

\subsection{Proof of Lemma~\ref{lem:isol-vert}}\label{pf-sec:rnd-cyc-basics:isol-vert}
Recall from page~\pageref{txt:def-of-Y} that $I_n$ denotes the event that, at time~$n$, the vertex~$n$ is added as an isolated vertex to the pedigree graph.

\begin{lemma}\label{lem:prob-isol-vert}
  For~$n\ge4$, $\displaystyle \Prb( I_n ) = \frac{4}{(n-1)(n-2)}$.
\end{lemma}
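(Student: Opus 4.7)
The plan is to combine Lemma~\ref{lem:cond-isol-vert} with an elementary count of Hamiltonian cycles through prescribed edges. Let $E(C)$ denote the set of cycle-edges of a cycle~$C$. By Lemma~\ref{lem:cond-isol-vert}, $I_n$ happens iff $\nu_A(n)\in E(B_n)$ and $\nu_B(n)\in E(A_n)$. Since $\nu_A(n),\nu_B(n)\subseteq[n-1]$, and the only cycle-edge of $B_{n-1}$ not in $E(B_n)$ is $\nu_B(n)$ itself (and symmetrically for~$A$), this is equivalent to
\[
\nu_A(n)\in E(B_{n-1}),\quad \nu_B(n)\in E(A_{n-1}),\quad \nu_A(n)\ne\nu_B(n).
\]
Writing $\{u,v\}:=\nu_A(n)$, the pair $\{u,v\}$ and the cycle $A_{n-1}$ are deterministic; by Fact~\ref{fact:Bn-uniform}, $B_{n-1}$ is uniform on cycles with node set $[n-1]$; and, conditional on $B_{n-1}$, $\nu_B(n)$ is uniform on the $n-1$ cycle-edges of $B_{n-1}$.

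Conditioning on $B_{n-1}$ and noting that $\{u,v\}\in E(B_{n-1})$ implies $\{u,v\}\in E(A_{n-1})\cap E(B_{n-1})$,
\[
\Prb(I_n\mid B_{n-1}) \;=\; \I{\{u,v\}\in E(B_{n-1})}\cdot\frac{|E(A_{n-1})\cap E(B_{n-1})|-1}{n-1}.
\]
On the event $\{u,v\}\in E(B_{n-1})$, I rewrite $|E(A_{n-1})\cap E(B_{n-1})|-1=\sum_{e\in E(A_{n-1})\setminus\{\{u,v\}\}}\I{e\in E(B_{n-1})}$; taking expectations then yields
\[
\Prb(I_n) \;=\; \frac{1}{n-1}\sum_{e\in E(A_{n-1})\setminus\{\{u,v\}\}}\Prb\bigl[\{u,v\}\in E(B_{n-1})\wedge e\in E(B_{n-1})\bigr].
\]

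For each summand I use the classical count: on $[m]$ there are $(m-1)!/2$ Hamiltonian cycles; those containing a prescribed edge $\{a,b\}$ correspond to Hamiltonian paths from $a$ to $b$ on $[m]\setminus\{a,b\}$ and number $(m-2)!$; imposing a second prescribed edge yields $(m-3)!$ cycles when the two edges share a vertex (the shared vertex's two cycle-neighbours are forced) and $2(m-3)!$ when they are vertex-disjoint (the block formed by the second edge admits two orientations). Dividing by $(m-1)!/2$ and setting $m=n-1$, the probability that two prescribed distinct edges both belong to $B_{n-1}$ is $\tfrac{2}{(n-2)(n-3)}$ in the shared-vertex case and $\tfrac{4}{(n-2)(n-3)}$ in the disjoint case. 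Among the $n-2$ cycle-edges of $A_{n-1}$ other than $\{u,v\}$, exactly $2$ share a vertex with $\{u,v\}$ (the two cycle-edges adjacent to $\{u,v\}$ in the cyclic order of $A_{n-1}$), and the remaining $n-4$ are vertex-disjoint from $\{u,v\}$. Putting everything together,
\[
\Prb(I_n) \;=\; \frac{1}{n-1}\Bigl[\,2\cdot\tfrac{2}{(n-2)(n-3)}+(n-4)\cdot\tfrac{4}{(n-2)(n-3)}\,\Bigr] \;=\; \frac{4(n-3)}{(n-1)(n-2)(n-3)} \;=\; \frac{4}{(n-1)(n-2)}.
\]

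The only step demanding real care is the two-edge Hamiltonian-cycle count (especially getting the factor of $2$ right in the disjoint case). The boundary $n=4$ (so $m=3$) causes no trouble: $n-4=0$ kills the disjoint term, and the shared-vertex formula specialises to $\tfrac{2}{(m-1)(m-2)}=1$ at $m=3$, correctly reflecting that every pair of distinct edges of the unique $3$-cycle shares a vertex.
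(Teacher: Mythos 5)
Your proof is correct, and it reaches the answer by a route that differs from the paper's in its computational core. The paper factors $\Prb(I_n)=\Prb(E_1)\Prb(E_2\mid E_1)$, where $E_1$ is the event that $\nu_A(n)$ is a common cycle-edge, and evaluates $\Prb(E_2\mid E_1)$ by describing the conditional law of $B$ given $E_1$ (contract $\nu_A(n)$ to a super-node, take a uniform cycle on $n-1$ nodes, then re-expand with a random orientation); the per-edge probabilities $1/\binom{n-2}{2}$ and $\tfrac{1/2}{\binom{n-2}{2}}$ come from the symmetry of where node~$n$ sits in the contracted cycle. You instead condition on $B_{n-1}$, observe that $\nu_B(n)$ is uniform on its $n-1$ cycle-edges, and reduce everything to the joint probability that two prescribed edges lie in a uniform Hamiltonian cycle, which you compute by explicit counts ($(m-3)!$ versus $2(m-3)!$). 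Both arguments end up splitting the cycle-edges of $A$ into the $2$ incident to $\nu_A(n)$ and the $n-4$ disjoint ones, and the per-edge contributions agree ($\tfrac{4}{(n-1)(n-2)(n-3)}$ in the disjoint case). Your version is somewhat more self-contained — it avoids having to justify the conditional contraction picture — at the cost of the two-edge enumeration, where you correctly handle the factor of $2$ and the $n=4$ boundary. Your preliminary reduction from $E(B_n)$ to $E(B_{n-1})$ together with the constraint $\nu_A(n)\neq\nu_B(n)$ is also a clean way to package what the paper absorbs into its definition of $E_1$ on $B_n$.
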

\begin{proof}
  For $x=1,2$, denote by~$E_x$ the event of condition~($x$) of Lemma~\ref{lem:cond-isol-vert}.
  We need to compute $\displaystyle \Prb(E_1\cap E_2) = \Prb(E_1) \Prb(E_2 \mid E_1)$.
  As for $E_1$, we have computed this probability in the proof of Lemma~\ref{lem:common-edges}: it is $2/(n-1)$.

  As for $\Prb(E_2 \mid E_1)$, conditioning on $\nu_A(n)$ being a cycle-edge amounts to
  \begin{enumerate}[label=\arabic*.]
  \item identifying these two nodes to a (super-)node, leaving $n-1$ nodes to consider, and taking a uniformly random cycle on these $n-1$ nodes; and then
  \item replacing the super-node by the cycle-edge $\nu_A(n)$, which requires deciding which of the two nodes comes first (in positive orientation).
  \end{enumerate}

  To calculate the conditional probability $\Prb(E_2 \mid E_1)$, we go over all cycle-edges~$e$ of $A_n$, and find the conditional probability of the event that $\nu_B(n) = e$.  Since these events are mutually exclusive, we then just add up the probabilities.

  Of the $n-1$ nodes after forming the super-node, $n-2$ involve nodes different from~$n$.  Each one of the $\binom{n-2}{2}$ possible cycle-edges between these nodes has equal chance of ending up being the one chosen by Bob at time~$n-1$.  Let us start with the cycle-edges~$e$ of $A_n$ with $\nu_A(n)\cap e = \emptyset$, i.e., they are not incident to the super-node.  There are $n-4$ of them.  For each one of them, we have
  \begin{equation*}
    \Prb( e = \nu_B(n) \mid E_1 ) = 1/\binom{n-2}{2}.
  \end{equation*}

  The probability that Bob's chosen cycle-edge is one of the two cycle-edges incident on the super-node is $2/\binom{n-2}{2}$.  Let~$e$ be one of the two cycle-edges with $\abs{ \nu_A(n)\cap e } =1$.  The probability that the orientation of the cycle-edge $\nu_A(n)$ of $B_n$ chosen in step~(2) above is so that Bob's chosen cycle-edge is equal to~$e$ is $\nfrac12$.  In total, we have, for each of these~2 edges~$e$,
  \begin{equation*}
    \Prb( e = \nu_B(n) \mid E_1 )
    = \frac{\nfrac 12}{\binom{n-2}{2}}.
  \end{equation*}

  We add up:
  \begin{equation*}
    \Prb(E_2 \mid E_1)
    = \frac{n-4}{\binom{n-2}{2}} + \frac{\nfrac12}{\binom{n-2}{2}} + \frac{\nfrac12}{\binom{n-2}{2}}
    = \frac{n-3}{\binom{n-2}{2}}
    = \frac{2}{n-2}.
  \end{equation*}
  In total, we have
  \begin{equation*}
    \Prb( I_n )
    = \Prb(E_1) \Prb(E_2 \mid E_1)
    = \frac{2}{n-1} \cdot \frac{2}{n-2}
    = \frac{4}{(n-1)(n-2)}.
  \end{equation*}
\end{proof}

We can now complete the proof of Lemma~\ref{lem:isol-vert}.
\begin{proof}[of Lemma~\ref{lem:isol-vert}]
  For the expectation, we calculate
  \begin{align*}
    \Exp Y
    &= \sum_{n=4}^\infty \Prb( I_n ) \\
    &= \sum_{n=4}^\infty \frac{4}{(n-1)(n-2)} &&\comment{by Lemma~\ref{lem:prob-isol-vert}} \\
    &= 4 \sum_{n=4}^\infty \lt( \frac{1}{n-2} - \frac{1}{n-1} \rt) \\
    &= 4 \lim_{n\to\infty}\bigl( \nfrac{1}{2} - \nfrac{1}{(n-1)} \bigr)\\
    &= 2.
  \end{align*}

  Similarly, for the statement about $I_n$, we find that
  \begin{equation*}
    \Prb\Bigl( \bigcup_{n \ge n_0} I_n \Bigr)
    \le
    \sum_{n=n_0}^\infty \Prb(I_n)
    =
    4 \sum_{n=n_0}^\infty \lt( \frac{1}{n-2} - \frac{1}{n-1} \rt)
    =
    4/(n_0-2)
    \le
    \eps.
  \end{equation*}
  This completes the proof of Lemma~\ref{lem:isol-vert}.
\end{proof}

\subsection{Proof of Lemma~\ref{lem:connect-to-component}}
\begin{proof}[Proof of Lemma~\ref{lem:connect-to-component}]
  Take $k := \max V(C)$.
  Since $k$ is a vertex, we have $\abs{ \nu_B(k) \cap \nu_A(k) } \le 1$.  Suppose that $\nu^+_B(k) \notin \nu_A(k)$ (the other case is symmetric).  Then, the first time Bob inserts a node, say~$n'$, into the cycle-edge on the positive side of~$k$, this will create a type-2 edge ``from $B$ to~$A$'' between $n'$ and~$k$.  Since~$k$ is the newest vertex in its component, Bob has not yet inserted a node there, so he can insert~$n$ there, now.
\end{proof}

\section{The Adjacency Game}\label{Sec:cnct-game}
At each time, Alice moves first.  As already explained, she determines the cycle~$A$ by choosing, at each time~$n$, the cycle-edge of~$A_{n-1}$ into which her new node~$n$ will be inserted.  Then Bob moves.  He determines~$B$ in the same way, but (using Fact~\ref{fact:Bn-uniform}), he will draw the cycle-edge of $B_{n-1}$ into which his new node~$n$ is inserted uniformly at random from all cycle-edges of $B_{n-1}$, and his choice is independent of his earlier choices.

We say that Bob wins, if there exists an $n_0$ such that for all $n\ge n_0$, the pedigree graph $G^{AB}_n$ is connected.  We need Bob to win ``uniformly'', i.e., $n_0$ must not depend on Alice's moves.

Let the random variable $T_n$ denote the number of connected components in the pedigree graph $G^{AB}_n$.  To analyze the development of the random process $T_\sq$, it turns out to be useful to consider a second random process, $S_\sq$.  Denote by $\CmnE_n$ the set of cycle-edges that Alice's cycle and Bob's cycles have in common,
\begin{equation*}
  \CmnE_n := E(A_n) \cap E(B_n),
\end{equation*}
and let
\begin{equation*}
  S_n := \abs{\CmnE_n}
\end{equation*}
count the number of cycle-edges that Alice and Bob have in common.  We will distinguish Alice's moves by whether or not she chooses a common cycle-edge to place her new cycle node.  % If she chooses a cycle-edge in $\CmnE$, we refer to her move as a c-move; otherwise a d-move.
The set $\CmnEa_n$ holds those common cycle-edges which are not incident on the cycle-edge which Alice chooses for her new node:
\begin{equation*}
  \CmnEa_n := \bigl\{ e \in \CmnE_n \mid e \cap \nu_A(n+1) = \emptyset \bigr\};
\end{equation*}
we let $\Sa$ count the cycle-edges in $\CmnEa$:
\begin{equation*}
  \Sa_n := \abs{\CmnEa_n}.
\end{equation*}
Finally, denote by $\RstE_n$ the set of cycle-edges in Bob's cycle which are neither common nor incident on Alice's chosen cycle-edge:
\begin{align*}
  \RstE_n   &:= \bigl\{ e \in E(B_n) \setminus \CmnE_n \mid e\cap \nu_A(n+1) = \emptyset \bigr\} \text{; and}\\
  \Rst_n &:= \abs{ \RstE_n }.
\end{align*}

We are now ready to state and prove the transition probabilities.  They depend on whether Alice chooses, for her new node, a common cycle-edge --- we refer to that as a c-move by Alice --- or a cycle-edge which is in the difference $E(A_n)\setminus \CmnE_n$ --- we call that a d-move.

\begin{lemma}\label{lem:transition-prbs}
  The conditional probabilities
  \begin{equation*}
    \Prb\Bigl(  S_{n+1}=S_n + \Delta_S \;\; \land \;\; T_{n+1}=T_n + \Delta_T \;\; \mid \; B_n \Bigr)
  \end{equation*}
  satisfy these bounds (entries not shown are ``$=0$''):
  \begin{center}
    \renewcommand{\arraystretch}{1.333333}
    \begin{tabular}{c|c|c|c|c|l c c|c|c|c|c|l}
      \multicolumn{1}{c|}{$\Delta_T$}&\multicolumn{5}{c}{}                                            &             &\multicolumn{1}{c|}{$\Delta_T$}&\multicolumn{5}{c}{} \\[.5ex]
      \cline{2-5}\cline{9-12}
      $+1$       &$=\frac{\Sa_n}{n}$&$\le\frac{2}{n}$   &                &                &          &             &$+1$       &           &                  &                           &                &                   \\[.25ex]
      \cline{2-5}\cline{9-12}
      $0$        &                  &$=\frac{\Rst_n}{n}$&$\le\frac{2}{n}$&$=\frac{1}{n}$  &          &             &$0$        &           &$=\frac{\Sa_n}{n}$&$\le\frac{\Rst_n-T_n+1}{n}$&$\le\frac{4}{n}$&                   \\[.25ex]
      \cline{2-5}\cline{9-12}
      $-1$       &                  &                   &                &                &          &             &$-1$       &\hspace*{2em}&                 &\multicolumn{2}{|c|}{ $\ge\frac{T_n-1}{n}$ }    &                   \\[.25ex]
      \cline{1-6}\cline{8-13}
      {}         &\multicolumn{1}{c}{$-2$}&\multicolumn{1}{c}{$-1$}&\multicolumn{1}{c}{$0$}&\multicolumn{1}{c}{$+1$}&$\Delta_S$&&{}&\multicolumn{1}{c}{$-2$}&\multicolumn{1}{c}{$-1$}&\multicolumn{1}{c}{$0$}&\multicolumn{1}{c}{$+1$}& $\Delta_S$        \\
      \multicolumn{6}{c}{c-move}                                                                     &\hspace{2em} &\multicolumn{6}{c}{d-move} \\[-1ex]
      \multicolumn{6}{c}{\footnotesize(Alice chooses common cycle-edge)}                                   &             &\multicolumn{6}{c}{\footnotesize(Alice chooses cycle-edge in $E(A_n)\setminus \CmnE_n$)}\\
    \end{tabular}
  \end{center}
\end{lemma}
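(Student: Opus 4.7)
Condition on Alice's move $e := \nu_A(n+1)$ and classify Bob's $n$ equally-likely choices $e' := \nu_B(n+1) \in E(B_n)$ into five bins: (i) $e' = e$ (possible only in the c-move, where $e \in E(B_n)$); (ii) $e' \in \CmnEa_n$ (size $\Sa_n$); (iii) $e' \in \CmnE_n$ with $e'\cap e \neq \emptyset$ and $e'\neq e$ (size $c_e$, with $c_e\le 2$); (iv) $e' \in \RstE_n$ (size $\Rst_n$); (v) $e' \notin \CmnE_n$ with $e'\cap e \neq \emptyset$ (size $d_e$, with $d_e\le 2$ in the c-move and $d_e\le 4$ in the d-move). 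Since every node in a cycle has exactly two incident cycle-edges, these sizes partition $E(B_n)$. Each element contributes $1/n$ to the transition probability of the $(\Delta_S,\Delta_T)$ it produces.

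\textbf{Computing $(\Delta_T,\Delta_S)$ per bin.} In bin (i), $\nu_A(n+1) = \nu_B(n+1)$, so $n+1$ is not a vertex ($\Delta_T=0$), and $e$ is subdivided identically in $A$ and $B$, replacing one common cycle-edge by two ($\Delta_S=+1$). In bins (ii)--(v), $n+1$ is a vertex. By Lemma~\ref{lem:segment-condition}, an $A\to B$ edge at $n+1$ exists iff $e\notin E(B_n)$ (iff d-move), and a $B\to A$ edge exists iff $e'\notin E(A_{n+1})$, equivalently $e'\notin \CmnE_n$ (since Alice destroys only $e$). Hence: c-move with (ii) or (iii) yields an isolated vertex ($\Delta_T=+1$); c-move with (iv) or (v) and d-move with (ii) or (iii) yield exactly one incident edge ($\Delta_T=0$); d-move with (iv) or (v) yields two incident edges, with $\Delta_T=0$ or $-1$ depending on whether both land in the same component of $G^{AB}_n$. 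The $\Delta_S$ bookkeeping is direct: $e$ and $e'$ are each removed from $\CmnE_n$ iff they lay there, and among the new edges at $n+1$ the common ones are exactly $\{\{v,n+1\}:v\in e\cap e'\}$. Running through the cases gives the $\Delta_S$-column of the table.

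\textbf{The $\Delta_T=-1$ bound in the d-move.} This is the only step beyond direct bookkeeping. I would apply Lemma~\ref{lem:connect-to-component}: for each component $C$ of $G^{AB}_n$, the vertex $k_C := \max V(C)$ determines a cycle-edge $f_C$ of $B_n$ (the current cycle-edge on the positive or negative side of $k_C$, whichever satisfies the lemma's condition) such that Bob's choice $e'=f_C$ creates a type-$2$ edge from $B$ to $A$ between $n+1$ and $k_C$. The $f_C$'s are pairwise distinct because the two endpoints of $f_C$ are $k_C$ and some node smaller than $k_C$, so $k_C$ is determined by $f_C$. Moreover, $f_C\notin E(A_n)$: otherwise $f_C\in E(A_{n+1})$ (Alice destroys only $e\neq f_C$) and Lemma~\ref{lem:segment-condition}(a) would rule out the $B\to A$ edge; hence each $f_C$ lies in bin (iv) or (v). In the d-move, Alice's $A\to B$ edge reaches some specific component $C^*$, so for each of the $T_n - 1$ components $C\neq C^*$, Bob's move $e'=f_C$ makes $n+1$ adjacent to vertices of two distinct components, giving $\Delta_T=-1$. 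Summing these $T_n - 1$ distinct moves gives the claimed $\ge (T_n - 1)/n$ lower bound.

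\textbf{Expected main obstacle.} The delicate step is the last: unpacking Lemma~\ref{lem:connect-to-component} to see that the $f_C$'s are distinct, non-common, and distinct from Alice's $e$. A secondary source of potential error is the $\Delta_S$-calculation in bins (iii) and (v), where the single shared endpoint of $e$ and $e'$ contributes exactly one new common edge at $n+1$; the remaining verifications are routine case analysis together with the degree-$2$ bounds $c_e,d_e\le 4$.
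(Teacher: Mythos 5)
Your proposal is correct and follows essentially the same route as the paper: your five bins are exactly the paper's case split on $H=\abs{\nu_A(n+1)\cap\nu_B(n+1)}$ crossed with whether $\nu_B(n+1)\in\CmnE_n$, and your treatment of the $\Delta_T=-1$ entry via the cycle-edge $f_C$ attached to $\max V(C)$ for each component $C$ (distinctness from the $k'<k$ property, non-commonness via Lemma~\ref{lem:segment-condition}) is precisely the paper's Fact~\ref{fact:reduce-components} argument. One small point in your favor: your bin (i) correctly yields $\Delta_S=+1$, matching the table, whereas the paper's prose for the $H=2$ event mistakenly asserts that $S$ is unchanged.
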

\begin{proof}
  Let us start with the case that Alice makes a \textbf{c-move,} i.e., she chooses a common cycle-edge to insert her new node $n+1$ into.  In symbols: $\nu_A(n+1)\in \CmnE_n$.
  In this case, there can be no edges ``from $A$ to~$B$'' incident to the vertex $n+1$ of $G^{AB}_{n+1}$.  We split into disjoint events based on
  \begin{equation*}
    H := \abs{  \nu_A(n+1) \cap \nu_B(n+1)  }
  \end{equation*}

  \textit{Event $H=2$.} %
  This happens if Bob's random choice of a cycle-edge for his new node $n+1$ is the same as Alices, i.e., $\nu_A(n+1)$.  The probability of his happening is $\nfrac1n$.  In that case, both $S$ and~$T$ are unchanged.

  \textit{Event $H=1$.} We split into two sub-events, ``$H=1$ and $\nu_B(n+1)\in \CmnE_n$'' and ``$H=1$ and $\nu_B(n+1)\notin \CmnE_n$''.  Each has probability at most $2/n$.
  \begin{itemize}
  \item $H=1$ and $\nu_B(n+1)\in \CmnE_n$ implies that both $\nu_A(n+1)$ and $\nu_B(n+1)$ are still common cycle-edges after Alice and Bob have added their new nodes $n+1$.  By Lemma~\ref{lem:cond-isol-vert}, this implies that $n+1$ is an isolated vertex of~$G_{n+1}$.  We have $T_{n+1} = T_n +1$, and $S_{n+1} = S_n -1$ (two common cycle-edges destroyed, one new created).
  \item Suppose $H=1$ and $\nu_B(n+1)\notin \CmnE_n$.  Simply from $\nu_B(n+1)\notin \CmnE_n$, applying Lemma~\ref{lem:segment-condition}\ref{lem:segment-condition:no-cycle-edge---ped-edge} (with $A$/$B$ exchanged), the existence of an edge incident on $n+1$ ``from $B$ to~$A$'' follows.  Moreover, that lemma also states that there is only one such edge.  Hence, we have $T_{n+1} = T_n$, and $S_{n+1} = S_n$ (one common cycle-edge destroyed, one new created).
  \end{itemize}

  \textit{Event $H=0$.}  We distinguish the same two sub-events as for $H=0$: ``$H=0$ and $\nu_B(n+1)\in \CmnE_n$'' and ``$H=0$ and $\nu_B(n+1)\notin \CmnE_n$''.
  \begin{itemize}
  \item The sub-event $H=0$ and $\nu_B(n+1)\in \CmnE_n$ occurs if Bob hits a cycle-edge in $\CmnEa_n$.  The probability of that happening is $\Sa_n/n$.  By Lemma~\ref{lem:segment-condition}\ref{lem:segment-condition:cycle-edge---no-ped-edge}, we do not have an edge in the pedigree graph incident on $n+1$, so an isolated vertex is created.  Hence $T_{n+1} = T_n +1$, and $S_{n+1} = S_n -2$ (two common cycle-edges destroyed, none new created).
  \item That $H=0$ and $\nu_B(n+1)\notin \CmnE_n$ means that that Bob hits a cycle-edge in $\RstE_n$; the probability of this happening is $\Rst_n/n$.  As in the sub-event above, applying Lemma~\ref{lem:segment-condition}\ref{lem:segment-condition:no-cycle-edge---ped-edge} gives the existence of exactly one edge ``from $B$ to~$A$'' incident on $n+1$.  Hence, we have $T_{n+1} = T_n$, and $S_{n+1} = S_n -1$ (one common cycle cycle-edge destroyed, none new created).
  \end{itemize}

  This completely partitions the probability space, with probabilities corresponding to the bounds in the table.

  \mypar%
  Now let us consider the case that Alice makes a \textbf{d-move,} i.e., she chooses a cycle-edge which is not common to both cycles, to insert her new node $n+1$ into.  In symbols: $\nu_A(n+1)\notin \CmnE_n$.
  By Lemma~\ref{lem:segment-condition}, this condition implies that $n+1$ is a vertex of the pedigree graph $G^{AB}_{n+1}$ and there is an edge ``from $A$ to~$B$'' incident on $n+1$.  This implies that $T_{n+1} \le T_n$.

  We distinguish cases based on the event $\nu_B(n+1) \in \CmnE_n$.

  \textit{Event $\nu_B(n+1) \in \CmnE_n$.}  We split into three subevents, depending on $H$ (as defined above):
  \begin{itemize}
  \item $\nu_B(n+1) \in \CmnE_n$ and $H=0$: This means that Bob hits a cycle-edge in $\CmnEa_n$.  The probability of that happening is $\Sa_n/n$.  In that case, by Lemma~\ref{lem:segment-condition}\ref{lem:segment-condition:cycle-edge---no-ped-edge}, there is no edge ``from $B$ to~$A$'', and we have $T_{n+1}=T_n$.  One common cycle-edge was destroyed, but none was created, so $S_{n+1} = S_n-1$.
  \item $\nu_B(n+1) \in \CmnE_n$ and $H=1$: There are at most~2 cycle-edges in $B_n$ which are both common with $A_n$ and have a node in common with $\nu_A(n+1)$.  Hence, the probability that Bob hits one of them is $2/n$.  Again, $T_{n+1}=T_n$.  One common cycle-edge was destroyed, but another one is created, so $S_{n+1}=S_n$.
  \end{itemize}
  Note that $H=2$ is impossible in a d-move.

  \textit{Event $\nu_B(n+1) \notin \CmnE_n$.}  Before we can proceed, we need to extract the following fact from the proof of Lemma~\ref{lem:connect-to-component}:

 \begin{fact}\label{fact:reduce-components}
   If $G^{AB}_n$ is not connected and~$C$ is a connected component, then, no matter what Alice's move at time $n+1$, there is a cycle-edge $\{k,k'\}$ of $B_n$ with the properties
   \begin{enumerate}[nosep,label=(\alph*)]
   \item $k' < k$ and $k \in C$;
   \item if Bob inserts his new node $n+1$ into that cycle-edge, there will be a type-2 edge ``from $B$ to~$A$'' betwen $n+1$ and~$k$.
   \end{enumerate}
 \end{fact}
 We already know that the cycle-edge $\{k,k'\}$ is not in $\CmnE$ (Lemma~\ref{lem:segment-condition}\ref{lem:segment-condition:cycle-edge---no-ped-edge}).

 Of the $T_n$ connected components at time~$n$, one contains the end vertex of the edge ``from $A$ to~$B$'' incident on $n+1$.  For each of the remaining $T_n-1$ components, take one cycle-edge as described in Fact~\ref{fact:reduce-components}.  By property~(a), all these cycle-edges are distinct.  Hence, the chance that Bob hits one of them is exactly $(T_n-1)/n$.  There may be more possibilities for Bob to reduce the number of connected components, but, by the last sentence in Lemma~\ref{lem:segment-condition}, the number of components which can be joined at time $n+1$ is at most~2.

 Now we can proceed with the event $\nu_B(n+1) \notin \CmnE_n$.  We split into sub-events:
 \begin{itemize}
 \item $\nu_B(n+1) \notin \CmnE_n$ and $T_{n+1}=T_n-1$: By what we have discussed, the probability of this happening is at least $(T_n-1)/n$.  As for the change in $S_{\sq}$, no common cycle-edge is destroyed (Lemma~\ref{lem:segment-condition}\ref{lem:segment-condition:cycle-edge---no-ped-edge}), but it is possible that one is created, so $S_{n+1} \in \{S_n,S_n+1\}$.
 \item $\nu_B(n+1) \notin \CmnE_n$ and $T_{n+1}=T_n$: We split into sub-sub-events:
   \begin{itemize}
   \item The above conditions and $H=1$: The probability of that happening is at most $4/n$.  Since no common cycle-edge is destroyed, but a new one is created, we have $S_{n+1}=S_n +1$.
   \item The above conditions and $H=0$: For that to happen, Bob has to hit a cycle-edge in $\RstE_n$.  At least $T_n-1$ of these cycle-edges lead to the situation $T_{n+1}=T_n-1$, so the probability for this sub-sub-event is at least $\frac{\Rst_n-T_n+1}{n}$.  No common cycle-edge is destroyed, and none is created, so $S_{n+1} = S_n$.
   \end{itemize}
 \end{itemize}

 This completely partitions the probability space, with probabilities corresponding to the bounds in the table.
\end{proof}

The proof of the main theorem now follows the following idea.  From the tables in Lemma~\ref{lem:transition-prbs}, you see that d-moves have chance of reducing the number of connected components --- albeit a small one.  Moreover, Alice cannot take a c-move only when $S_n > 0$, but c-moves have a strong tendency to reduce $S_\sq$.  We prove that the number of d-moves that Alice has to take are frequent enough to lead to a decrease in the number of connected components.  This suffices to prove Theorem~\ref{Sec:mainproof}, along the lines sketched on page~\pageref{paragraph:intuition}.

The next section gives more details of the proof.

\section{Proof of Theorem~\ref{thm:main:pedigreegraph}}\label{Sec:mainproof}
Using the Azuma-Hoeffding super-martingale tail bound, we prove that, for large enough $n_0$, Alice has to take many d-moves between times $n_0$ and $2n_0$.

\begin{lemma}\label{lem:Alice-will-play--d--often}
  For every $\eps\in\lt]0,1\rt[$, if $n_0 \ge \max(900, 8\ln(1/\eps))$, and $n_1 := 2 n_0$ then, whatever Alice does, the probability, conditioned on $B_{n_0}$ and $S_{n_0}\le \ln^2 n_0$, that among her moves at times $n = n_0+1,\dots,n_1$, there are fewer than $n_0/3$ d-moves, is at most~$\eps$.
\end{lemma}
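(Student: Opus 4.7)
The plan is to exploit the strong negative drift of $S_k$ under c-moves (each c-move decreases $S_k$ by about $1$ in conditional expectation, per Lemma~\ref{lem:transition-prbs}) against the at-most-unit positive drift under d-moves, so that Alice cannot sustain many more than $O(S_{n_0})$ c-moves without $S_k$ being pushed below zero. Fix Alice's strategy and condition on $B_{n_0}$ and $S_{n_0}\le \ln^2 n_0$, and for $n_0\le k\le n_1$ write $C_k$ and $D_k=(k-n_0)-C_k$ for the numbers of c- and d-moves made in $(n_0,k]$, with $\mathcal F_k$ the history through time $k$. Let $\Delta S_k:=S_k-S_{k-1}$.

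First, extract from the transition table of Lemma~\ref{lem:transition-prbs} the drift bounds
\begin{align*}
  \Exp[\Delta S_k \mid \mathcal F_{k-1},\, c\text{-move}] &\le -1 + 8/k, \\
  \Exp[\Delta S_k \mid \mathcal F_{k-1},\, d\text{-move}] &\le +1,
\end{align*}
the first by summing $-2\Sa_k/k - (\Rst_k+2)/k + 1/k$ and using the trivial bounds $\Sa_k\ge S_k-2\ge 0$ and $\Rst_k\ge k-S_k-5$; the second being trivial since $\Delta S_k\le 1$ holds deterministically. Define the potential $W_k:=S_k-D_k$. For $n_0>8$ these drift bounds give $\Exp[W_k-W_{k-1}\mid \mathcal F_{k-1}]\le 0$ under either move type, so $(W_k)_{k\ge n_0}$ is a supermartingale with bounded increments $W_k-W_{k-1}\in\{-2,-1,0,1\}$.

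Now Doob-decompose $W_k=W_{n_0}+A_k+M_k$ where $A_k\le 0$ is predictable and $M$ is a martingale with $M_{n_0}=0$ and range-$3$ differences. On the event $\{D_{n_1}<n_0/3\}$ one has $C_{n_1}>2n_0/3$, and summing the c-move drifts gives $A_{n_1}\le -(2n_0/3)(1-8/n_0)\le -2n_0/3+6$; on the other hand $S_{n_1}\ge 0$ and $D_{n_1}<n_0/3$ force $W_{n_1}\ge -n_0/3$. Combined with $W_{n_0}\le \ln^2 n_0$ this yields
\begin{equation*}
  M_{n_1} \;=\; W_{n_1}-W_{n_0}-A_{n_1} \;\ge\; -n_0/3 - \ln^2 n_0 + 2n_0/3 - 6 \;\ge\; n_0/4,
\end{equation*}
the last inequality using the numerical hypothesis $n_0\ge 900$ (so $n_0/12 \ge 6 + \ln^2 n_0$). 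Hence $\{D_{n_1}<n_0/3\}\subseteq\{M_{n_1}\ge n_0/4\}$.

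Applying the Azuma-Hoeffding inequality to the range-$3$ martingale $M$ over $n_0$ steps produces $\Prb(M_{n_1}\ge n_0/4)\le \exp(-c\, n_0)$ for an absolute constant $c>0$, which is at most $\eps$ once $n_0\ge (1/c)\ln(1/\eps)$; this is absorbed into the hypothesis $n_0\ge 8\ln(1/\eps)$ after tightening the martingale-difference bound carefully (e.g., by accounting separately for the rare $\Delta S_k=\pm 1$ transitions). The main obstacle is the first step: reading off a clean uniform conditional-expectation bound from the many entries of the Lemma~\ref{lem:transition-prbs} table, especially in the d-move case where the $\Delta_T=-1$ row is only partially specified; after that the supermartingale-plus-Azuma machinery is standard, and only bookkeeping is required to match the stated constants.
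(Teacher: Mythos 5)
Your proof follows essentially the same route as the paper's: both convert the strong negative drift of $S$ under c-moves into a supermartingale (the paper uses $X_{n+1}=X_n+\I{C_n}(.99-L_{n+1})$, you use the Doob decomposition of $S_k-D_k$; these differ only in bookkeeping) and then apply Azuma--Hoeffding, using $S\ge 0$ to force a large martingale deviation on the event of few d-moves. The only point you do not actually close is the constant in the exponent: with increments spanning an interval of length~$3$, your Azuma step gives roughly $\exp(-n_0/72)$, which does not match the hypothesis $n_0\ge 8\ln(1/\eps)$ for small $\eps$, and the proposed repair (``tightening the martingale-difference bound'') would in fact require a variance-sensitive inequality such as Freedman's together with control of $S_n$ throughout the window; since the lemma is only ever invoked with $\eps=\nfrac{1}{42}$ and $n_0\ge 900$, this affects the stated constants rather than the substance (and the paper's own constant rests on the step $X_{n_1}\ge 1.2\,n_0$, whereas $1.99\cdot\tfrac{2}{3}-1\approx 0.33$).
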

\begin{proof}
  Denote by $C \subseteq \{n_0+1,\dots,n_1\}$ the set of times in which Alice takes c-move; so Alice takes d-moves at every time $D := \{n_0+1,\dots,n_1\}\setminus C$.

  Denote by $C_n$ the event that Alice chooses a c-move at time $n+1$.  The reason for the index~$n$ instead of $n+1$ is that $C_n$ and $D_n$ are $B_n$ measureable\footnote{%
    This is the case because Alice's decision is based on $A_n$ and $B_n$ in a deterministic way, and $A_n$ is deterministically determined from $B_{n-1}$ which is determined by $B_n$, and so on (induction).%
  };%
  we need that for the Martingale argument below.

  Now, for every $n \in C$, let
  \begin{equation*}
    L_n := \I{\{S_n < S_{n-1}\}} - \I{\{S_n > S_{n-1}\}}.
  \end{equation*}
  Since, by Lemma~\ref{lem:transition-prbs}, when Alice decides for a c-move, $S_n > S_{n-1}$ is equivalent to $S_n = S_{n-1}+1$, we have
  \begin{equation}\label{eq:Alice-will-play-d-often:c-mv--bd-S}
    S_n \le S_{n-1} - L_n,
  \end{equation}
  and
  \begin{equation*}
    \Prb(S_{n+1} < S_n \mid B_n, C_n )
    \ge
    \frac{\Sa_n}{n} + \frac{\Rst_n}{n}
    \ge
    \frac{S_n-4}{n} + \frac{n-S_n-4}{n}
    =
    1-8/n;
  \end{equation*}
  note that the inequalities remain valid if $S_n < 4$.  As for increasing $S_{\sq}$, we have
  \begin{equation*}
    \Prb( S_{n+1} > S_n \mid B_n, C_n )  =  1/n.
  \end{equation*}
  Hence
  \begin{equation}\label{eq:Alice-will-play-d-often:Exp-of-L}
    \Exp\bigl(L_{n+1} \mid B_n, C_n \bigr) \ge 1-9/n \ge .99,
  \end{equation}
  where the last inequality follows since $n_0 \ge 900$.

  For the $n\in D$, i.e., when Alice decides for a d-move, we upper-bound by~1 the probability that Alices increases $S_{\sq}$, so,
  \begin{equation}\label{eq:Alice-will-play-d-often:d-mv--bd-S}
    S_{n+1} \le S_n + 1-\I{C_n},
  \end{equation}
  since, by Lemma~\ref{lem:transition-prbs}, $S_{n+1}>S_n$ is equivalent to $S_{n+1}=S_n+1$ for d-moves, too.

  Combining inequalities \eqref{eq:Alice-will-play-d-often:c-mv--bd-S} and~\eqref{eq:Alice-will-play-d-often:d-mv--bd-S}, we have
  \begin{equation}\label{eq:Alice-will-play-d-often:total--bd-S}
    S_{n+1} - S_n \le 1-\I{C_n} + \I{C_n}\cdot L_{n+1}.
  \end{equation}

  We define a super-martingale as follows.  Let, $X_{n_0} := S_{n_0}$ and for each $n=n_0,\dots,n_1-1$,
  \begin{equation*}
    X_{n+1} = X_n + \I{C_n}\cdot\bigl( .99 - L_{n+1} \bigr).
  \end{equation*}
  (We define $L_n := 0$ for $n\not\in C$.)

  By induction, $X_{n+1}$ is determined by $B_{n+1}$, so the measurability property of a super-martingale is given.
  Moreover, by~\eqref{eq:Alice-will-play-d-often:Exp-of-L}, we have
  \begin{equation*}
    \Exp\bigl( X_{n+1} \mid B_n \bigr)
    =
    X_n + \I{C_n}\cdot \bigl( .99 - \Exp(L_{n+1} \mid B_n) \bigr)
    \le
    X_n;
  \end{equation*}
  hence $X_\sq$ is in fact a super-martingale.

  \begin{claim}
    For $n \ge n_0$, we have $\displaystyle X_n \ge S_n + 1.99\sum_{j=n_0}^{n-1}C_j -(n-n_0)$.
  \end{claim}
  \textit{Proof of the claim, by induction.}  The claim holds for $n=n_0$.  Assume that it holds for~$n\ge n_0$.  We have
  \begin{align*}
    X_{n+1} &= X_n + \I{C_n}\cdot\bigl( .99 - L_{n+1} \bigr)
    \\
    &\ge S_n + 1.99\sum_{j=n_0}^{n-1}C_j -(n-n_0) \quad+ \I{C_n}\cdot\bigl( .99 - L_{n+1} \bigr) &&\comment{I.H.}
    \\
    &= S_n + 1.99\sum_{j=n_0}^{n-1}C_j -(n-n_0) \quad+1.99\cdot\I{C_n} -1 \quad+ (1-\I{C_n}) -\I{C_n} L_{n+1}
    \\
    &\ge S_n + 1.99\sum_{j=n_0}^{n-1}C_j -(n-n_0) \quad+1.99\cdot\I{C_n} -1 \quad+ S_{n+1}-S_n &&\comment{\eqref{eq:Alice-will-play-d-often:total--bd-S}}
    \\
    &= S_{n+1} + 1.99\sum_{j=n_0}^{n}C_j -(n+1-n_0),
  \end{align*}
  which completes the proof of the claim. \quad$\blackdiamond$

  \mypar%
  From the claim, since $S_n \ge 0$ always, we have
  \begin{equation*}
    X_{n_1} \ge 1.99\sum_{j=n_0}^{n_1-1}C_j -(n_1-n_0) = 1.99 \abs{C} - (n_1-n_0) = 1.99\abs{C} - n_0.
  \end{equation*}
  The event $\abs{D} \le n_0/3$ implies the event $\abs{C} \ge 2 n_0 /3$, and hence the event $X_{n_1} \ge 1.2 n_0$.

  To apply the Azuma-Hoeffding inequality for super-martingales (e.g., Lemma 4.2 in \cite{Wormald-LecNot-99}), we first note that $\abs{ X_{n+1} - X_n } \le 1.99$ deterministically.  We conclude that the probability of the event $\abs{D} \le n_0/3$ is at most
  \begin{equation*}
    \exp\lt(-\frac{ \bigl( 1.2 n_0 - S_{n_0} \bigr)^2 }{2\cdot 1.99^2\cdot n_0}  \rt)
    \le
    \exp\lt(-\frac{ \bigl( 1.2 n_0 - S_{n_0} \bigr)^2 }{8n_0}  \rt)
    \le
    \exp\lt(-\frac{ n_0^2 }{8 n_0}  \rt),
  \end{equation*}
  where the last inequality follows since $S_{n_0} \le \ln^2 n_0 \le .2 n_0$, whenever $n_0 \ge 300$.
  We continue, using $n_0 \ge 8 \ln(1/\eps)$,
  \begin{equation*}
    \Prb( \abs{D} \le n_0/3 )
    \le
    e^{-n_0/8}
    \le
    \eps,
  \end{equation*}
  which concludes the proof of the lemma.
\end{proof}

From this, we deduce that must $T_\sq$ decrease, but some sophistication is needed, because of the slow divergence of $\sum 1/n$: Indeed, between $n_0$ and $2n_0$, $T_\sq$ decreases only with a constant probability:

\begin{lemma}\label{lem:T-will-decrease}
  Fix $\delta := \nfrac{1}{42}$.  If $n_0 \ge \max(900, 8\ln(1/\delta))$, and $n_1 := 2 n_0$ then, whatever Alice does,
  \begin{equation*}
    \Prb\Bigl( \exists n \in \{n_0+1,\dots,n_1\}\colon \;\; T_{n+1} < T_n \;\; \Bigm| B_{n_0}, T_{n_0} \ge 2, S_{n_0} \le \ln^2 n_0 \Bigr)  \quad \ge \quad 1/7.
  \end{equation*}
\end{lemma}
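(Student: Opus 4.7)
The plan is to combine Lemma~\ref{lem:Alice-will-play--d--often}, which forces Alice to play many d-moves in the window, with the d-move column of Lemma~\ref{lem:transition-prbs}, which shows that every d-move performed while $T_n\ge 2$ decreases $T$ with conditional probability at least $(T_n-1)/n\ge 1/(2n_0)$. The target is that at least one such decrease occurs with probability $\ge 1/7$.

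First I will apply Lemma~\ref{lem:Alice-will-play--d--often} with $\eps:=1/42$, so that except on an event $\mathcal{M}^c$ of conditional probability at most $1/42$, Alice plays $K\ge n_0/3$ d-moves in the window $\{n_0+1,\dots,n_1\}$. Let $\mathcal{N}$ denote the ``no-decrease'' event: $T_{n+1}\ge T_n$ for every $n$ in the window. From the transition table of Lemma~\ref{lem:transition-prbs}, d-moves satisfy $\Delta_T\in\{-1,0\}$ and c-moves satisfy $\Delta_T\in\{0,+1\}$, so on $\mathcal{N}$ the sequence $T_\sq$ is non-decreasing across the window; combined with $T_{n_0}\ge 2$ this forces $T_n\ge 2$ throughout. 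In particular every d-move on $\mathcal{N}$ must contribute $\Delta_T=0$.

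The iteration is cleanest via a super-martingale. Let $D'_n$ count the d-moves Alice has made in the window up to the transition from $n$ to $n+1$, and let $\mathcal{N}_n$ be the event that no decrease of $T$ has yet occurred in the window. Put
\begin{equation*}
  Z_n\;:=\;\Bigl(1-\tfrac{1}{2n_0}\Bigr)^{-D'_n}\,\mathbf{1}_{\mathcal{N}_n}.
\end{equation*}
A case analysis (c-move vs.\ d-move, plus the transition table) shows that $(Z_n)$ is a non-negative super-martingale with $Z_{n_0}=1$: the only non-trivial case is a d-move at time $n+1$ on $\{Z_n>0\}$, where $T_n\ge 2$ forces $\Prb(T_{n+1}=T_n\mid\mathcal{F}_n,\text{d-move})\le 1-1/(2n_0)$, precisely cancelling the factor $(1-1/(2n_0))^{-1}$ introduced by the increment of $D'_\sq$. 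Since on $\mathcal{M}\cap\mathcal{N}$ we have $Z_{n_1}\ge (1-1/(2n_0))^{-n_0/3}$, Markov's inequality applied to $\Exp Z_{n_1}\le 1$ gives
\begin{equation*}
  \Prb(\mathcal{M}\cap\mathcal{N})\;\le\;\Bigl(1-\tfrac{1}{2n_0}\Bigr)^{n_0/3}\;\le\; e^{-1/6}.
\end{equation*}
Combined with $\Prb(\mathcal{M}^c)\le 1/42$, this yields $\Prb(\mathcal{N})\le e^{-1/6}+1/42$, whence $\Prb(\mathcal{N}^c)\ge 1/7$.

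The main obstacle is verifying that $Z_\sq$ really is a super-martingale: Alice's adaptive choice of c- versus d-move (which is $\mathcal{F}_n$-measurable) has to be tracked through the case analysis so that the ``d-move, $\mathcal{N}_n$'' case tightly matches the inverse factor in $Z_n$. A secondary concern is the tightness of the closing numerical inequality $1-e^{-1/6}-1/42\ge 1/7$; should the crude bound $(1-1/(2n_0))^{n_0/3}\le e^{-1/6}$ leave too little slack, one can recover margin by exploiting that the stopping times $\tau_1<\dots<\tau_K$ satisfy $\tau_k\ge n_0+k$ but are often strictly less than $n_1$, so many contribute a per-step decrease probability strictly larger than $1/(2n_0)$, or equivalently by choosing $\delta$ slightly smaller than $1/42$ (still compatible with the $n_0\ge 8\ln(1/\delta)$ hypothesis of Lemma~\ref{lem:Alice-will-play--d--often}).
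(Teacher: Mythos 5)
The overall structure here (invoke Lemma~\ref{lem:Alice-will-play--d--often} to force at least $n_0/3$ d-moves, then use the $(T_n-1)/n \ge 1/n$ entry of the d-move table in Lemma~\ref{lem:transition-prbs} to force a decrease of $T_\sq$) is the same as the paper's, and your super-martingale $Z_\sq$ is a clean way to handle Alice's adaptivity --- arguably more rigorous than the paper's informal product bound $\prod_{n\in D}(1-\nfrac1n)$ over a random set $D$. However, the proof as written does not establish the stated constant $\nfrac17$: your uniform per-step bound $1/n \ge 1/(2n_0)$ gives $\Prb(\mathcal{M}\cap\mathcal{N}) \le (1-\tfrac{1}{2n_0})^{n_0/3} \le e^{-1/6} \approx 0.8465$, and then $1 - e^{-1/6} - \nfrac{1}{42} \approx 0.1297 < \nfrac17 \approx 0.1429$. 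So the ``secondary concern'' you flag is in fact fatal to the numbers as you have arranged them; the closing inequality is false, not merely tight, and the lemma is not proved without one of your repairs actually being carried out.

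Both repairs do work, and the paper's proof is essentially your first one: it keeps the factor $1-\nfrac1n$ at the actual time $n$ of each d-move and observes that the worst placement of a set $D\subseteq\{n_0+1,\dots,2n_0\}$ with $|D|\ge n_0/3$ is $D=\{5n_0/3,\dots,2n_0\}$, giving $\sum_{n\in D}\nfrac1n \ge \ln(6/5)$ and hence a no-decrease probability of at most $5/6$ rather than $e^{-1/6}$; then $5/6+\nfrac{1}{42}=6/7$ exactly. Your second repair (run Lemma~\ref{lem:Alice-will-play--d--often} with $\eps=\nfrac{1}{100}$, say, which is permitted since $n_0\ge 900 \ge 8\ln 100$) also closes the gap, but only barely: $1-e^{-1/6}-\nfrac{1}{100}\approx 0.1435$. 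Either fix should be written out explicitly; as submitted, the final numerical step fails.
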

\begin{proof}
  From Lemma~\ref{lem:Alice-will-play--d--often}, with probability at least $1-\delta$, there are at least $n_0/3$ d-moves among Alices moves at times $n=n_0+1,\dots,n_1:=2 n_0$.

  Denote by~$D$ the set of d-moves by Alice in time $\{n_0+1,\dots,n_1\}$.  By Lemma~\ref{lem:transition-prbs}, the probability~$p$ that none of these d-moves decreases $T_{\sq}$ is at most
  \begin{equation*}
    p
    \le
    \prod_{n \in D} \Prb\bigl( T_n < T_{n-1} \mid T_{n-1} \ge 2 \bigr)
    \le
    \prod_{n \in D} \lt( 1- \frac{1}{n} \rt).
  \end{equation*}
  Hence
  \begin{equation*}
    \ln p
    \le
    \sum_{n \in D} \ln(1-1/n)
    \le
    -\sum_{n \in D} 1/n.
  \end{equation*}
  The subset~$D$ of $\{n_0+1,\dots,n_1\}$ of size at least $n_0/3$ which maximizes the last expression is $D := \{5 n_0 /3,\dots,n_1\}$, so we get
  \begin{equation*}
    \ln p
    \le
    -\sum_{n=5 n_0 /3}^{n_1} 1/n
    \le
    -\ln\Bigl( \frac{n_1}{5 n_0 /3} \Bigr)
    =
    -\ln(6/5).
  \end{equation*}
  We conclude that $p \le 5/6$.  In total, the probability that $T_{\sq}$ never decreases is at most $5/6 + \delta = 5/6+1/42 = 6/7$.
\end{proof}

We can boost the probability to $1-\eps$, for arbitrary $\eps >0$, by iterating the argument $\Omega(\ln(1/\eps))$ times.

\begin{lemma}\label{lem:T-will-decrease:boost}
  Fix $\delta := \nfrac{1}{42}$.  For every $\eps\in\lt]0,\nfrac{1}{56}\rt[$, with $a := 10\ln(2/\eps)$, if
  \begin{equation*}
    n_0 \ge \max(900, 8\ln(1/\delta), (2a)^{4/\eps}, e^{6/\eps}),
  \end{equation*}
  and $n_1 := 2 a n_0$ then, whatever Alice does,
  \begin{equation*}
    \Prb\Bigl( \exists n \in \{n_0,\dots,n_1\}\colon \;\; T_{n+1} < T_n \;\; \Bigm| B_{n_0}, T_{n_0} \ge 2 \Bigr)  \quad \ge \quad 1-\eps
  \end{equation*}
\end{lemma}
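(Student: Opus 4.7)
The plan is to boost Lemma~\ref{lem:T-will-decrease} by iterating it over a sequence of consecutive doubling windows $W_j := [2^j n_0,\; 2^{j+1} n_0]$, $j = 0,\dots,k-1$, filling $[n_0, 2an_0]$, and combining the per-window bounds by the chain rule for conditional probability. The number of windows $k$ should be chosen so that $(6/7)^k \le \eps/2$, i.e., $k = \Theta(\ln(1/\eps))$. The hypothesis $n_0 \ge (2a)^{4/\eps}$ combined with $a = 10\ln(2/\eps)$ is exactly what makes enough doublings fit into the available range.

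Before iterating Lemma~\ref{lem:T-will-decrease}, I would first show that, whatever Alice plays, the hypothesis $S_{m_j} \le \ln^2 m_j$ of that lemma holds at every window-start $m_j := 2^j n_0$ simultaneously, with probability at least $1-\eps/2$. This is a concentration statement about the $S$-process. Reading off the transition tables in Lemma~\ref{lem:transition-prbs}, $S_\sq$ has drift at most $-0.99$ under every c-move when $S_n$ is not tiny, and at most $+4/n$ under every d-move. An appropriate super-martingale variant of the $X_n$ from the proof of Lemma~\ref{lem:Alice-will-play--d--often}, combined with a maximal Azuma--Hoeffding (or Freedman) inequality, then gives uniform control of $S_n$ across the $k$ doubling scales. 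The threshold $n_0 \ge e^{6/\eps}$ is what provides the required margin.

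Conditional on that ``$S$ stays small'' event, let $F_j$ denote the event that $T$ does not decrease anywhere in $W_j$. On $\bigcap_{i<j} F_i$ we have $T_{m_j} \ge T_{n_0} \ge 2$ (c-moves cannot decrease $T$ and, by assumption, no decrease has occurred earlier) and $S_{m_j} \le \ln^2 m_j$, so Lemma~\ref{lem:T-will-decrease} applies to $W_j$ with $m_j$ in place of $n_0$, yielding $\Prb(F_j \mid \bigcap_{i<j} F_i, B_{m_j}) \le 6/7$. Multiplying by the chain rule and adding the $\eps/2$ failure probability from the $S$-control step gives $\Prb(\bigcap_j F_j) \le (6/7)^k + \eps/2 \le \eps$, as required.

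The main technical obstacle is the uniform-in-$j$ control of $S$ in the first step. Single-window concentration is essentially Lemma~\ref{lem:Alice-will-play--d--often}, but the simultaneous statement across all $k$ doubling scales needs a maximal martingale inequality rather than a plain Azuma, and the quantitative bounds must be tight enough to cash in precisely on the specific thresholds $(2a)^{4/\eps}$ and $e^{6/\eps}$ of the hypothesis. This matching of numerology is where the particular constants in $a = 10\ln(2/\eps)$ and the constraint $\eps < 1/56$ are used.
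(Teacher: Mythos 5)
There is a genuine gap in your window decomposition. You tile $[n_0,2an_0]$ with doubling windows $[2^jn_0,\,2^{j+1}n_0]$ and want $k=\Theta(\ln(1/\eps))$ of them so that $(6/7)^k\le\eps/2$. But the number of doublings that fit inside $[n_0,2an_0]$ is $\log_2(2a)=\log_2(20\ln(2/\eps))=O(\ln\ln(1/\eps))$, and this is determined by the ratio $n_1/n_0=2a$ alone: the hypothesis $n_0\ge(2a)^{4/\eps}$ is a lower bound on the absolute starting time and does not buy you a single extra doubling. With only $O(\ln\ln(1/\eps))$ applications of Lemma~\ref{lem:T-will-decrease}, the bound you get is $(6/7)^{O(\ln\ln(1/\eps))}$, which does not come close to $\eps/2$. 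To make doubling windows work you would need $n_1/n_0=(1/\eps)^{\Theta(1)}$, which is a much weaker statement than the one being proved.

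The paper instead cuts the range into $a$ consecutive windows of \emph{fixed} length $2n_0$, namely $\{(2j+1)n_0,\dots,(2j+3)n_0\}$ for $j=0,\dots,a-1$, applies Lemma~\ref{lem:T-will-decrease} once per window, and bounds the product of conditional probabilities by $(\eps+6/7)^a\le(7/8)^a\le\eps/2$; this is where $a=10\ln(2/\eps)$ and $\eps<1/56$ are cashed in. The hypothesis $n_0\ge(2a)^{4/\eps}$ serves a different purpose: it feeds Lemma~\ref{lem:S-never-too-big} with $b=2a$ to guarantee $S_n\le\ln^2 n$ throughout the whole range except with probability $\eps$ (the paper's version of your ``$S$ stays small'' step is a first-moment bound on the number of $+1$ increments of $S_\sq$, so no maximal martingale inequality is needed), and $n_0\ge e^{6/\eps}$ feeds Lemma~\ref{lem:common-edges} to remove the conditioning on $S_{n_0}\le\ln^2 n_0$ via Markov's inequality. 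Your overall architecture --- iterate Lemma~\ref{lem:T-will-decrease} over windows, control $S$ uniformly, combine by the chain rule --- is the right one, but the window geometry and the role of the numerical hypotheses must be corrected as above.
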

Before we can prove Lemma~\ref{lem:T-will-decrease:boost}, we need to control the size of $S_\sq$ through the following two lemmas.

\begin{lemma}\label{lem:common-edges}
  If $n_0 \ge 4$, then, whatever Alice does,
  \begin{equation*}
    \Prb( S_{n_0} > \ln n_0 ) \le 3 / \ln n_0
  \end{equation*}
\end{lemma}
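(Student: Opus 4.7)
The plan is to bound $\Exp[S_{n_0}]$ by an absolute constant and then apply Markov's inequality. To control $\Exp[S_n]$, I would derive a one-step recursion that tracks $\Delta S_n := S_n - S_{n-1}$ when both players insert their new node~$n$.

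Introduce the indicators $X := \I{\nu_A(n)\in\CmnE_{n-1}}$, $Y := \I{\nu_B(n)\in\CmnE_{n-1}}$, $Z := \I{\nu_A(n)=\nu_B(n)}$ and the count $H := |\nu_A(n) \cap \nu_B(n)|$. A careful case analysis --- of which old common cycle-edges survive (noting that when $Z=1$ the same cycle-edge is destroyed by both players, so one must add $Z$ back to avoid double-counting) and which new common cycle-edges $\{n,w\}$ with $w\in\nu_A(n)\cap\nu_B(n)$ are created --- yields the identity $\Delta S_n = -X - Y + Z + H$. Conditional on $B_{n-1}$ and Alice's choice $\nu_A(n)$, Bob's $\nu_B(n)$ is uniform over the $n-1$ cycle-edges of $B_{n-1}$, so $\Exp[Y \mid B_{n-1}] = S_{n-1}/(n-1)$, $\Exp[Z \mid B_{n-1}] = X/(n-1)$, and (summing $\Prb(u \in \nu_B(n)) = 2/(n-1)$ over the two endpoints $u$ of $\nu_A(n)$, each having degree~$2$ in $B_{n-1}$) $\Exp[H \mid B_{n-1}] = 4/(n-1)$.

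Collecting these yields the uniform inequality $\Exp[\Delta S_n \mid B_{n-1}] \le (4 - S_{n-1})/(n-1)$, valid regardless of whether Alice plays a c-move or a d-move. Taking full expectation gives $(n-1)\Exp[S_n] \le (n-2)\Exp[S_{n-1}] + 4$; substituting $v_n := \Exp[S_n] - 4$ reduces this to $(n-1)v_n \le (n-2)v_{n-1}$, which telescopes from $v_3 = -1$ (base case $\Exp[S_3] = 3$) to give $\Exp[S_n] \le 4 - 2/(n-1)$, bounded by an absolute constant for all $n \ge 3$. Markov's inequality then yields $\Prb(S_{n_0} > \ln n_0) \le \Exp[S_{n_0}]/\ln n_0$, a tail bound of the claimed form.

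The main technical obstacle is establishing the one-step identity $\Delta S_n = -X - Y + Z + H$ correctly in all cases --- in particular, when Alice and Bob happen to choose the same cycle-edge ($Z = 1$, which forces $X = Y = 1$ and $H = 2$, and so contributes a net $+1$ to $S_n$), and when exactly one of $\nu_A(n), \nu_B(n)$ is in $\CmnE_{n-1}$. Once the identity is in hand, the expected-change computation and the Markov step are routine.
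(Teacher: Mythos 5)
Your argument is internally correct but takes a much longer route than the paper and falls just short of the stated constant. The paper's proof is a two-line linearity-of-expectation computation: writing $S_{n_0}=\sum_{e\in E(A_{n_0})}\Ind\bigl(E_{n_0}(e)\bigr)$, where $E_{n_0}(e)$ is the event that $e$ is a cycle-edge of the uniformly random $B_{n_0}$, and using that each fixed pair $e$ satisfies $\Prb(E_{n_0}(e))=2/(n_0-1)$, it obtains $\Exp S_{n_0}=2n_0/(n_0-1)\le 3$ and then applies Markov exactly as you do. Your one-step identity $\Delta S_n=-X-Y+Z+H$ checks out in all cases (including $Z=1$, which forces $X=Y=1$ and $H=2$), the conditional expectations are right, and the telescoping from $v_3=-1$ is correct; but the conclusion is $\Exp S_{n_0}\le 4-2/(n_0-1)$, so Markov yields a bound of roughly $4/\ln n_0$ rather than the claimed $3/\ln n_0$. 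The loss comes from discarding the nonpositive term $-X\bigl(1-\tfrac{1}{n-1}\bigr)$: your recursion has fixed point $4$, whereas the true expectation for a fixed cycle $A$ tends to $2$. This shortfall is harmless for the only downstream use of the lemma (it just changes the constant in the requirement $n_0\ge e^{6/\eps}$), but strictly speaking the stated inequality is not established. On the other hand, your derivation has a real advantage the paper's lacks: because you condition on $B_{n-1}$ at every step, your bound remains valid for an adaptive Alice whose choice of $\nu_A(n)$ depends on Bob's history, whereas the paper's computation implicitly treats $A_{n_0}$ as fixed independently of $B_{n_0}$ --- and for an adaptive Alice who always avoids common cycle-edges, $\Exp S_n$ really does drift toward $4$, so your constant is essentially tight in that setting.
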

\begin{proof}
  For $e=\{k,\ell\} \in \binom{[{n_0}]}{2}$, denote by $E_{n_0}(e)$ the event that~$e$ is a cycle-edge of~$B_{n_0}$.  For ${n_0}\ge 4$, by Fact~\ref{fact:Bn-uniform},
  \begin{equation*}
    \Prb( E_{n_0}(e) ) = 2/({n_0}-1).
  \end{equation*}
  Since
  \begin{equation*}
    S_{n_0} = \sum_{e \in A_{n_0}} \Ind( E_{n_0}(e) ),
  \end{equation*}
  we find that $\Exp S_{n_0} = \frac{2n}{{n_0}-1} \le 3$ (the last inequality follows from ${n_0}_0\ge 3$.  By Markov's inequality, we have
  \begin{equation*}
    \Prb( S_{n_0} \ge \ln {n_0} ) \le \frac{3}{\ln {n_0}}.
  \end{equation*}
\end{proof}

% ++++++++++++++++++++++++++++++++++++++++++++++++++++++++++++++++++++++++++++++++++++++++++++++++++++++

\begin{lemma}\label{lem:S-never-too-big}
  For all $\eps \in\lt]0,1\rt]$ and $b > 1$, if $n_0 \ge \max(10,b^{4/\eps}$ then with $n_1 := b n_0$, whatever Alice does,
  \begin{equation*}
    \Prb\bigl( \exists n\in\{n_0,\dots,n_1-1\}: S_n \ge \ln^2 n \mid B_{n_0}, S_{n_0}\le ln^2 n_0 \bigr) \le \eps.
  \end{equation*}
\end{lemma}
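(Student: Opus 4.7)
The plan is to control the number of ``upward jumps'' of $S_\sq$. From the case analysis behind Lemma~\ref{lem:transition-prbs}, $\Delta S_{n+1}\in\{-2,-1,0,+1\}$, and a closer look at how a new common cycle-edge can be created yields
\[
 \Prb(\Delta S_{n+1}=+1\mid\mathcal{F}_n)\le 4/n
\]
regardless of Alice's strategy. Indeed, in a c-move we need $\nu_B(n+1)=\nu_A(n+1)$ (the $H=2$ case, contributing probability exactly $1/n$); in a d-move $\nu_A(n+1)\notin\CmnE_n$ rules out $H=2$, so we need $|\nu_A(n+1)\cap\nu_B(n+1)|=1$ together with $\nu_B(n+1)\notin\CmnE_n$, and there are at most $4$ cycle-edges of $B_n$ sharing exactly one endpoint with $\nu_A(n+1)$.

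Set $Z_n:=\sum_{j=n_0+1}^n\I{\{\Delta S_j=+1\}}$. Since $\Delta S\le 1$, we have the pathwise inequality $S_n\le S_{n_0}+Z_n$; combined with the conditioning $S_{n_0}\le\ln^2 n_0$, this gives
\[
 \bigl\{\exists n\in[n_0,n_1-1]\colon S_n\ge\ln^2 n\bigr\}
 \;\subseteq\;
 \bigl\{\exists n\colon Z_n\ge \ln^2 n-\ln^2 n_0\bigr\}.
\]
The adapted $\{0,1\}$-valued increments of $Z_\sq$ have conditional mean $\le 4/j$, so by standard stochastic dominance $Z_n$ is dominated by $\tilde Z_n:=\sum_{j=n_0+1}^n Y_j$, where the $Y_j$ are independent with $\Prb(Y_j=1)=4/j$; in particular $\Exp\tilde Z_n\le 4\ln(n/n_0)$.

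A naive union bound over the $\Theta(n_0)$ times would lose too much. The key is to group times by the integer level of the threshold $\phi(n):=\ln^2 n-\ln^2 n_0$: for $k=1,\dots,K$ with $K:=\lceil\phi(n_1)\rceil$, let $N_k:=\max\{n\in[n_0,n_1]\colon\phi(n)<k+1\}$. Since both $Z_\sq$ and $\phi$ are non-decreasing,
\[
 \bigl\{\exists n\colon Z_n\ge\phi(n)\bigr\}\;\subseteq\;\bigcup_{k=1}^{K}\bigl\{Z_{N_k}\ge k\bigr\}.
\]
Under the hypothesis $n_0\ge b^{4/\eps}$ we have $\ln n_0\ge 4\ln b/\eps$, so $K=O(\ln b\cdot\ln n_0)\ll\ln^2 n_0$. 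The elementary estimate $\ln(N_k/n_0)\le\sqrt{\ln^2 n_0+k+1}-\ln n_0\le(k+1)/(2\ln n_0)$ then gives $\Exp\tilde Z_{N_k}\le 2(k+1)/\ln n_0$, and the multiplicative Chernoff bound $\Prb(\tilde Z_n\ge k)\le(e\cdot\Exp\tilde Z_n/k)^{k}$ yields
\[
 \Prb(Z_{N_k}\ge k)\;\le\;\lt(\frac{C}{\ln n_0}\rt)^{\!k}
\]
for a universal constant $C$. Summing this geometric series over $k\ge 1$ gives total probability $O(1/\ln n_0)$, which is at most $\eps$ once $\ln n_0$ is a sufficiently large multiple of $1/\eps$; the hypothesis $n_0\ge\max(10,b^{4/\eps})$ guarantees exactly this, with the $10$-bound taking care of the regime where $\ln b$ is close to~$0$.

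The principal obstacle is the union bound over $\Theta(n_0)$ time-points. A per-$n$ bound summed naively loses a factor of $n_0$, which swamps the Chernoff-type decay. Exploiting the monotonicity of both $Z_\sq$ and $\phi$ to collapse every integer level of the threshold into a single event is what makes the argument go through.
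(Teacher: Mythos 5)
Your setup (up-step probability at most $4/n$, pathwise bound $S_n\le S_{n_0}+Z_n$, level-decomposition plus Chernoff) is sound as far as it goes, and it is a genuinely different route from the paper, which simply applies Markov's inequality once to the total number of up-steps over the whole window. But the final step of your argument has a real gap: the hypothesis $n_0\ge\max(10,b^{4/\eps})$ does \emph{not} guarantee that $\ln n_0$ is a large multiple of $1/\eps$. Since $b^{4/\eps}=e^{4\ln b/\eps}$, the exponent degenerates as $b\downarrow 1$; e.g.\ for $b=1.001$ and $\eps=0.01$ the hypothesis only forces $n_0\ge 10$, while your bound $\sum_{k\ge1}(C/\ln n_0)^k$ with $C=4e$ does not even converge for $\ln n_0=\ln 10$. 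The ``$10$-bound'' gives a lower bound on $\ln n_0$ that is independent of $\eps$, so it cannot ``take care of the regime where $\ln b$ is close to $0$''. Moreover your bound is essentially tight given your starting point: already the $k=1$ term (one up-step before $\ln^2 n$ has grown by $1$ past $\ln^2 n_0$, a window of $\Theta(n_0/\ln n_0)$ steps each carrying up-step probability $\Theta(1/n_0)$) is genuinely of order $1/\ln n_0$, which is not $\le\eps$ under the stated hypotheses.

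The root cause is that you took the conditioning $S_{n_0}\le\ln^2 n_0$ at face value, which makes your threshold $\ln^2 n-\ln^2 n_0$ vanish as $n\downarrow n_0$; under that reading the lemma is in fact false for some admissible parameters (take $S_{n_0}=\lfloor\ln^2 n_0\rfloor$ followed by a single up-step at time $n_0+1$). The paper's own proof silently replaces the conditioning by $S_{n_0}\le\ln n_0$ --- consistent with Lemma~\ref{lem:common-edges}, which bounds $\Prb(S_{n_0}>\ln n_0)$ --- so that the required number of up-steps is at least $\ln^2 n_0-\ln n_0$ \emph{uniformly} over the window. Then no level-decomposition is needed: the expected number of up-steps is at most $\sum_{n=n_0}^{n_1-1}4/n\approx 4\ln b\le\eps\ln n_0$, and a single application of Markov's inequality gives a bound of roughly $\eps\ln n_0/(\ln^2 n_0-\ln n_0)=\eps/(\ln n_0-1)\le\eps$. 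So your machinery is fighting what appears to be a typo in the statement; with the corrected conditioning the elementary first-moment argument suffices, and as written your concluding inequality does not follow from the stated hypotheses.
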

\begin{proof}
  Whatever Alice does, by Lemma~\ref{lem:transition-prbs}, we have $S_{n+1} \le S_n +1$ always and the probability that $S_{n+1} = S_n+1$ is at most $4/n$.  If $S_n \ge \ln^2 n$ for an $n\in\{n_0,\dots,n_1-1\}$, then either $S_{n_0} \ge \ln n_0$, or the number of times $n\in\{n_0,\dots,n_1-1\}$ that $S_{n+1} = S_n +1$ is at least $\ln^2 n - \ln n_0$.  But
  \begin{equation*}
    \ln^2 n - \ln n_0
    \ge
    \ln^2 n_0  - \ln n_0
  \end{equation*}
  The expected number of times $n\in\{n_0,\dots,n_1-1\}$ that $S_{n+1} = S_n +1$ is
  \begin{equation*}
    \sum_{n=n_0}^{n_1-1} \frac{4}{n} \le 4( \ln(n_1/n_0) + 1/n_0 - 1/n_1 ) \le 4\ln(n_1) -\ln(n_0).
  \end{equation*}
  (We have used the well-known bound $\sum_{\ell=m}^n \frac{1}{\ell} \le \ln(n/(m-1))$.)

  Hence, the probability that we have $S_n \ge \ln^2 n$ for some $n\in\{n_0,\dots,n_1-1\}$ is at most
  \begin{equation*}
    \frac{  4\ln(n_1) -\ln(n_0)  }{  \ln^2 n_0  - \ln n_0  }
    =
    \frac{  4\ln(n_1/n_0) -1  }{ \ln n_0 - 1 }
    =
    \frac{  4\ln b -1  }{  \ln n_0 -  1 }
    \le
    \eps.
  \end{equation*}
  The last inequality follows from $4\ln b \le \eps \ln n_0 \le \eps \ln n_0 + 1-\eps$ (as $\eps \le 1$).
\end{proof}

% ++++++++++++++++++++++++++++++++++++++++++++++++++++++++++++++++++++++++++++++++++++++++++++++++++++++

We are now ready to prove Lemma~\ref{lem:T-will-decrease:boost}.

\begin{proof}[of Lemma~\ref{lem:T-will-decrease:boost}]
  For $j=0,1,2,\dots$, denote by $U_j$ the event that
  \begin{equation*}
    \forall n \in \{(2j+1) n_0,\dots,(2j+3)n_0\}\colon \;\; T_{n+1} \ge T_n.
  \end{equation*}
  By Lemma~\ref{lem:T-will-decrease}, we have, for each~$j$,
  \begin{equation*}
    \Prb\Bigl( U_j  \Bigm|  B_{(2j+1)n_0}, T_{(2j+3)n_0} \ge 2, S_{(2j+1)n_0}\le \ln^2((2j+1)n_0) \Bigr) \le 6/7,
  \end{equation*}
  so
  \begin{align}
    &\Prb\Bigl( \forall j=0,\dots, a-1\colon U_j \Bigm| B_{n_0}, T_{n_0} \ge 2, S_{n_0}\le \ln^2(n_0) \Bigr)
    \\
    &= \notag \prod_{j=0}^{a-1}\Prb\Bigl( U_j \Bigm| B_{n_0}, T_{n_0} \ge 2, S_{n_0}\le \ln^2(n_0), \forall i=0,\dots, j-1\colon U_i \Bigr)
    \\
    &\le \tag{$*$}\label{eq:T-will-decrease:boost:average} \prod_{j=0}^{a-1}( \eps + 6/7 )
    \\
    &\le \tag{$**$}\label{eq:T-will-decrease:boost:34pluseps} (7/8)^a
    \\
    &\le \notag e^{\ln(2/\eps)} = \eps/2.
  \end{align}
  The last inequality follows from $10 > 1/\ln(8/7)$; inequality~\eqref{eq:T-will-decrease:boost:34pluseps} follows from $\eps \le 1/56$.

  As for inequality~\eqref{eq:T-will-decrease:boost:average}, we note that $T_{2j n_0}=1$ implies the event $\complement U_{j-1}$ ($\complement$ is the complement), and, by Lemma~\ref{lem:S-never-too-big} the probability that there is exists an $n=n_0,\dots, n_1$, with $S_n > \ln^2 n$ is at most~$\eps$.

  An application of Lemma~\ref{lem:common-edges} gets rid of the conditioning on $S_{n_0}\le \ln^2(n_0)$: Indeed, since from $n_0 \ge e^{6/\eps}$ we have $3/\ln(n_0) \le \eps/2$, this adds another $\eps/2$ to the final probability.  The bound, in total, is $\eps$.
\end{proof}

Note that Lemma~\ref{lem:T-will-decrease:boost} also gets rid of the conditioning on $S_n \le \ln^2 n$.

We are now ready to complete the proof of the main theorem.

\begin{proof}[of Theorem~\ref{thm:main:pedigreegraph}.]
  Let $\eps' \in\lt]0,\nfrac{1}{2}\rt[$ be given.  Set $t := 6/\eps'$.  Since $T_{\sq}$ can only increase when an isolated vertex is created, we have $T_n \le Y$, for all $n\ge 4$, where~$Y$ is the number of isolated vertices.  Hence, by Lemma~\ref{lem:isol-vert} and Markov's inequality, we have
  \begin{equation*}
    \Prb\Bigl( \exists n \ge 4\colon \;\; T_n \ge t+1 \Bigr)
    \le
    \Prb( Y \ge t )
    \le
    \Exp(Y) / t
    = \eps'/3.
  \end{equation*}

  Now take $n_0' \ge 12/\eps' +2$, and large enough to apply Lemma~\ref{lem:T-will-decrease:boost} $n_0:=n_0'$ and to $\eps := \eps'/3t$ (note that this is less than $1/56$).  Denote by $a$ be the number defined in that lemma.  Applying the lemma~$t$ times, for $n_0$ ranging over $n'_0 +j2a n'_0$, $j=0,\dots,t-1$, the probability that we fail at least once to obtain a decrease in the number of connected components, $T_\sq$, is at most $\eps'/3$.  So, with probability at least $1-2\eps'/3$, we must have $T_{n_0}=1$ for one of these $n_0$'s or for an $n$ between $n'_0 +(t-1)2a n'_0$ and $n'_0 +t2a n'_0$.

  Finally, since $n_0' \ge 12/\eps' +2$, by Lemma~\ref{lem:isol-vert}, with probability $1-\eps'/3$, $T_\sq$ will not increase after $n'_0$, and hence, with probability $1-\eps'$, will drop to~1 and stay there for all eternity.  Bob wins.
\end{proof}

\section{Some Open Questions}
There are two questions which we believe should be asked in the context of our result.

Firstly, are there other polytopes whose graphs are not complete, but the minimum degree is asymptotically that of a complete graph?  Could that even be the case for the Traveling Salesman Problem polytope itself?

Secondly, in view of the Traveling Salesman Problem polytope, it would be interesting to find other combinatorial conditions on cycles which are implied by the adjacency of the corresponding vertices on the TSP polytope.  The pedigree graph connectedness condition is derived from an extension of the TSP polytope, but maybe there are other combinatorial conditions without that geometric context.  The graph resulting from such a condition might be ``closer'' to the actual TSP polytope graph.

\section*{Acknowledgments}
The authors would like to thank Kaveh Khoshkhah for pointing us to the idea of analyzing the pair $(S,T)$ of random variables.

%%%%%%%%%%%%%%%%%%%%%%%%%%%%%%%%%%%%%%%%%%%%%%%%%%%%%%%%%%%%%%%%%%%%%%%%%%%%%%%%%%%%%%%%%%%%%%%%%%%%%%%%%%%%%%%%%%%%%%%%%%%%%%%%%%%%%%%%%%%%%%%%%%%%%%

%%%%%%%%%%%%%%%%%%%%%%%%%%%%%%%%%%%%%%%%%%%%%%%%%%%%%%%%%%%%%%%%%%%%%%%%%%%%%%%%%%%%%%%%%%%%%%%%%%%%%%%%%%%%%%%%%%%%%%%%%%%%%%%%%%%%%%%%%%%%%%%%%%%%%%
\end{document}